\newtheorem{theorem}{Theorem}
\newtheorem{remark}{Remark}
\DeclarePairedDelimiter\ceil{\lceil}{\rceil}
\begin{document}
\title{\LARGE \bf Taxi Dispatch with Real-Time Sensing Data in Metropolitan Areas: A Receding Horizon Control Approach}

\author{Fei~Miao,
            ~\IEEEmembership{Student Member,~IEEE,}
              Shuo~Han,
              ~\IEEEmembership{Member,~IEEE,}
             Shan~Lin,
             John A.~Stankovic,
            ~\IEEEmembership{Fellow,~IEEE and ACM,}
              Desheng~Zhang, 
             Sirajum~Munir,
             Hua~Huang,
             Tian~He,
             and~George~J.~Pappas
            ~\IEEEmembership{Fellow,~IEEE}
\thanks{This work was supported by NSF grant numbers CNS-1239483, CNS-1239108, CNS-1239226, and CPS-1239152 with project title: CPS: Synergy: Collaborative Research: Multiple-Level Predictive Control of Mobile Cyber Physical Systems with Correlated Context. The preliminary conference version of this work can be found in~\cite{taxi_iccps15}.}             
\thanks{F. Miao, S. Han and G. J. Pappas are with Department of Electrical and Systems Engineering, University of Pennsylvania, Philadelphia PA, 19104, USA 19014.  Email: \{miaofei, hanshuo, pappasg\}@seas.upenn.edu.}
\thanks{S.~Lin and H.~Huang are with Department of Electrical and Computer Engineering, Stony Brook University, Long Island, NY 11794, USA. Email:\{shan.x.lin, hua. huang\}@stonybrook.edu\}.}
\thanks{J.A.~Stankovic and S.~Munir are with Department of Computer Science, University of Virginia, Charlottesville, VA 22904, USA. Email: stankovic@cs.virginia.edu, sm7hr@virginia.edu.}
\thanks{D.~Zhang and T.~He are with Department of Computer Science and Engineering, University of Minnesota, Minneapolis, MN 55455, USA. Email:\{tianhe, zhang\}@cs.umn.edu}
}


\maketitle

\label{abstract}
\begin{abstract}
Traditional taxi systems in metropolitan areas often suffer from inefficiencies due to uncoordinated
actions as system capacity and customer demand change. With the pervasive deployment of networked sensors in modern vehicles, large amounts of information regarding customer demand and system status can be collected in real time. This information provides opportunities to perform various types of control and coordination for large-scale
intelligent transportation systems. In this paper, we present a receding horizon control (RHC)
framework to dispatch taxis, which incorporates highly spatiotemporally correlated demand/supply models and real-time GPS location and occupancy information. The objectives include matching spatiotemporal ratio between demand and supply for service quality with minimum current and anticipated future taxi idle driving distance. Extensive trace-driven analysis with a data set containing taxi operational records in San Francisco shows that our solution
reduces the average total idle distance by $52\%$, and reduces the supply demand ratio error across the city during one experimental time slot by $45\%$. 
Moreover, our RHC framework is compatible with a wide variety of predictive models and optimization problem formulations. This compatibility property allows us to solve robust optimization problems with corresponding demand uncertainty models that provide disruptive event information. 
\end{abstract}

\renewcommand{\abstractname}{Note to Practitioners}
\begin{abstract}
With the development of mobile sensor and data processing technology, the competition between traditional \lq \lq hailed on street'' taxi service and \lq \lq on demand'' taxi service has emerged in the US and elsewhere. In addition, large amounts of data sets for taxi operational records provide potential demand information that is valuable for better taxi dispatch systems. Existing taxi dispatch approaches are usually greedy algorithms focus on reducing customer waiting time instead of total idle driving distance of taxis. Our research is motivated by the increasing need for more efficient, real-time taxi dispatch methods that utilize both historical records and real-time sensing information to match the dynamic customer demand. This paper suggests a new receding horizon control (RHC) framework aiming to utilize the predicted demand information when making taxi dispatch decisions, so that passengers at different areas of a city are fairly served and the total idle distance of vacant taxis are reduced. We formulate a multi-objective optimization problem based on the dispatch requirements and practical constraints. The dispatch center updates GPS and occupancy status information of each taxi periodically and solves the computationally tractable optimization problem at each iteration step of the RHC framework. Experiments for a data set of taxi operational records in San Francisco show that the RHC framework in our work can redistribute taxi supply across the whole city while reducing total idle driving distance of vacant taxis. In future research, we plan to design control algorithms for various types of demand model and experiment on data sets with a larger scale.
\end{abstract}
\begin{IEEEkeywords}
Intelligent Transportation System, Real-Time Taxi Dispatch, Receding Horizon Control, Mobility Pattern
\end{IEEEkeywords}

\section{Introduction}
\label{intro}

More and more transportation systems are equipped with various sensors and wireless radios to enable novel mobile cyber-physical systems, such as intelligent highways, traffic light control, supply chain management, and autonomous fleets. The embedded sensing and control technologies in these systems significantly improve their safety and efficiency over traditional systems. In this paper, we focus on modern taxi networks, where real-time occupancy status and the Global Positioning System (GPS) location of each taxi are sensed and collected to the dispatch center. Previous research has shown that such data contains rich information about passenger and taxi mobility patterns~\cite{Dmodel, sf_data, prevacant}. Moreover, recent studies have shown that the passenger demand information can be extracted and used to reduce passengers' waiting time, taxi cruising time, or future supply rebalancing cost to serve requests~\cite{dataset_strategy, Powell2011, mod}. 


Efficient coordination of taxi networks at a large scale is a challenging task. Traditional taxi networks in metropolitan areas largely rely on taxi drivers' experience to look for passengers on streets to maximize individual profit. However, such self-interested, uncoordinated behaviors of drivers usually result in spatiotemporal mismatch between taxi supply and passenger demand. In large taxi companies that provide dispatch services, greedy algorithms based on locality are widely employed, such as finding the nearest vacant taxi to pick up a passenger~\cite{rt_gps}, or first-come, first-served. Though these algorithms are easy to implement and manage, they prioritize immediate customer satisfaction over global resource utilization and service fairness, as the cost of rebalancing the entire taxi supply network for future demand is not considered. 

Our goal is to utilize real-time information to optimize taxi network dispatch for anticipated future idle driving cost and global geographical service fairness, while fulfilling current, local passenger demand. To accomplish such a goal, we incorporate both system models learned from historical data and real-time taxi data into a taxi network control framework. To the best of our knowledge, this is the first work to consider this problem. The preliminary version of this work can be found in~\cite{taxi_iccps15}, and more details about problem formulation, algorithm design, and numerical evaluations are included in this manuscript. 

In this paper, we design a computationally efficient moving time horizon framework for taxi dispatch with large-scale real-time information of the taxi network. Our dispatch solutions in this framework consider future costs of balancing the supply demand ratio under realistic constraints. We take a receding horizon control (RHC) approach to dynamically control taxis in large-scale networks. Future demand is predicted based on either historical taxi data sets~\cite{bootstrap} or streaming data~\cite{Dmodel}. The real-time GPS and occupancy information of taxis is also collected to update supply and demand information for future estimation. This design iteratively regulates the mobility of idle taxis for high performance, demonstrating the capacity of large-scale smart transportation management. 

The contributions of this work are as follows, 
\begin{itemize}
\item To the best of our knowledge, we are the first to design an RHC framework for large-scale taxi dispatching. We consider both current and future demand, saving costs under constraints by involving expected future idle driving distance for re-balancing supply.
\item The framework incorporates large-scale data in real-time control. Sensing data is used to build predictive passenger demand, taxi mobility models, and serve as real-time feedback for RHC. 
\item Extensive trace driven analysis based on a San Francisco taxi data set shows that our approach reduces average total taxi network idle distance by $52\%$ as in Figure~\ref{compare_gps}, and the error between local and global supply demand ratio by $45\%$ as in Figure~\ref{compare_ratio}, compared to the actual historical taxi system performance. 
\item Spatiotemporal context information such as disruptive passenger demand is formulated as uncertainty sets of parameters into a robust dispatch problem. This allows the RHC framework to provide more robust control solutions under uncertain contexts as shown in Figure~\ref{fig:robust}. The error between local and global supply demand ratio is reduced by $25\%$ compared with the error of solutions without considering demand uncertainties. 
\end{itemize}

\subsection{State-of-the-Art}
\label{related_work}
There are three categories of research topics related to our work: taxi dispatch systems, transportation system modeling, and multi-agent coordination and control. 

A number of recent works study approaches of taxi dispatching services or allocating transportation resources in modern cities. Zhang and Pavone~\cite{mod} designed an optimal rebalancing method for autonomous vehicles, which considers both global service fairness and future costs, but they didn't take idle driving distance and real-time GPS information into consideration. Truck schedule methods to reduce costs of idle cruising and missing tasks are designed in the temporal perspective in work~\cite{rtpick_Patrick}, but the real-time location information is not utilized in the algorithm. Seow et.al\ focus on minimizing total customer waiting time by concurrently dispatching multiple taxis and allowing taxis to exchange their booking assignments~\cite{multiagent_taxi}. A shortest time path taxi dispatch system based on real-time traffic conditions is proposed by Lee et.al~\cite{Lee_review}. In~\cite{recommend,Huang:2012,Powell2011}, authors aim to maximize drivers' profits by providing routing recommendations. These works give valuable results, but they only consider the current passenger requests and available taxis. Our design uses receding horizon control to consider both current and predicted future requests. 

Various mobility and vehicular network modeling techniques have been proposed for transportation systems~\cite{traffic_model, time_Work}. Researchers have developed methods to predict travel time~\cite{time_predict, time_bayen} and traveling speed~\cite{Jaillet}, and to characterize taxi performance features~\cite{dataset_strategy}. A network model is used to describe the demand and supply equilibrium in a regulated market is investigated~\cite{Yang_review}. These works provide insights to transportation system properties and suggest potential enhancement on transportation system performance. Our design takes a step further to develop dispatch methods based on available predictive data analysis.

There is a large number of works on mobility coordination and control. Different from taxi services, these works usually focus on region partition and coverage control so that coordinated agents can perform tasks in their specified regions~\cite{ccms_Martinez, survey_koutsoukos, mobile_network}. Aircraft dispatch system and air traffic management in the presence of uncertainties have been addressed~\cite{ddd_air, air_traffic}, while the task models and design objectives are different from taxi dispatching problem. Also, receding horizon control (RHC) has been widely applied for process control, task scheduling, and multi-agent transportation networks~\cite{mpc_transport, traffic_mpc, rhc_multiagent}.  
These works provide solid results for related mobility scheduling and control problems. However, none of these works incorporates both the real-time sensing data and historical mobility patterns into a receding horizon control design, leveraging the taxi supply based on the spatiotemporal dynamics of passenger demand. 

The rest of the paper is organized as follows. The background of taxi monitoring system and control problems are introduced in Section~\ref{sec:prob_descrip}. The taxi dispatch problem is formally formulated in Section~\ref{sec:prob_form}, followed by the RHC framework design in Section~\ref{sec:algorithm}.
A case study with a real taxi data set from San Francisco to evaluation the RHC framework is shown in Section~\ref{sec:simulation}.
Concluding remarks are made in Section~\ref{sec:conclusion}.

\section{Taxi Dispatch Problem: Motivation and System}
\label{sec:prob_descrip}
\begin{figure}[b!]
\vspace{-15pt}
\centering
\includegraphics [width=0.39\textwidth]{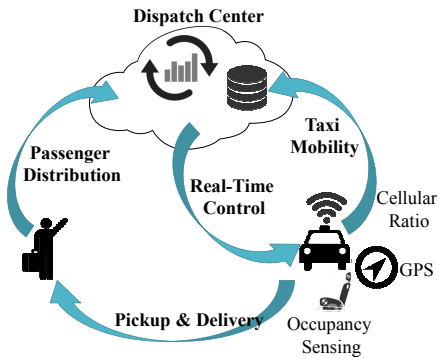}
\vspace{-8pt}
\caption{A prototype of the taxi dispatch system}
\label{sys_structure}
\end{figure}
Taxi networks provide a primary transportation service in modern cities. Most street taxis respond to passengers' requests on their paths when passengers hail taxis on streets. 
This service model has successfully served up to 25\% public passengers in metropolitan areas, such as San Francisco and New York~\cite{sf_survey,taxi_tomorrow}. However, passenger's waiting time varies at different regions of one city and taxi service is not satisfying. In the recent years, "on demand"  transportation service providers like Uber and Lyft aim to connect a passenger directly with a driver to minimize passenger's waiting time. This service model is still uncoordinated, since drivers may have to drive idly without receiving any requests, and randomly traverse to some streets in hoping to receive a request nearby based on experience.

Our goal in this work is a centralized dispatch framework to coordinate service behavior of large-scale taxi Cyber-Physical system. The development of sensing, data storage and processing technologies provide both opportunities and challenges to improve existing taxi service in metropolitan areas. Figure~\ref{sys_structure} shows a typical monitoring infrastructure, which consists of a dispatch center and a large number of geographically distributed sensing and communication components in each taxi. The sensing components include a GPS unit and a trip recorder, which provides real-time geographical coordinates and occupancy status of every taxi to the dispatch center via cellular radio. The dispatch center collects and stores data. Then, the monitoring center runs the dispatch algorithm to calculate a dispatch solution and sends decisions to taxi drivers via cellular radio. Drivers are notified over the speaker or on a special display. 

Given both historical data and real-time taxi monitoring information described above, we are capable to learn spatiotemporal characteristics of passenger demand and taxi mobility patterns. This paper focuses on the dispatch approach with the model learned based on either historical data or streaming data. One design requirement is balancing spatiotemporal taxi supply across the whole city from the perspective of system performance. 
 It is worth noting that heading to the allocated position is part of idle driving distance for a vacant taxi. Hence, there exists trade-off between the objective  of matching supply and demand and reducing total idle driving distance. We aim at a scalable control framework that directs vacant taxis towards demand, while balancing between minimum current and anticipated future idle driving distances.

\section{Taxi Dispatch Problem Formulation}
\label{sec:prob_form}

Informally, the goal of our taxi dispatch system is to schedule vacant taxis towards predicted passengers both spatially and temporally with minimum total idle mileage. We use supply demand ratio of different regions within a time period as a measure of service quality, since sending more taxis for more requests is a natural system-level requirement to make customers at different locations equally served. Similar service metric of service node utilization rate has been applied in resource allocation problems, and autonomous driving car mobility control approach~\cite{mod}.

 The dispatch center receives real-time sensing streaming data including each taxi's GPS location and occupancy status with a time stamp periodically. The real-time data stream is then processed at the dispatch center to predict the spatiotemporal patterns of passenger demand. Based on the prediction, the dispatch center calculates a dispatch solution in real-time, and sends decisions to vacant taxis with dispatched  regions to go in order to match predicted passenger demands.

Besides balancing supply and demand, another consideration in taxi dispatch is minimizing the total idle cruising distance of all taxis. A dispatch algorithm that introduces large idle distance in the future after serving current demands can decrease total profits of the taxi network in the long run. Since it is difficult to perfectly predict the future of a large-scale taxi service system in practice, we use a heuristic estimation of idle driving distance to describe anticipated future cost associated with meeting customer requests.  
Considering control objectives and computational efficiency, we choose a receding horizon control approach.   
We assume that the optimization time horizon is $T$, indexed by $k=1,\dots, T$, given demand prediction during time $[1,T]$.  
\subsection*{Notation}
In this paper, we denote $\mathbf{1}_N$ as a length $N$ column vector of all $1$s, and $\mathbf{1}_N^T$ is the transpose of the vector.  Superscripts of variables as in $X^k, X^{k+1}$ denote discrete time. We denote the $j$-th column of matrix $X^k$ as $X_{\cdot j}^k$. 

\subsection{Supply and demand in taxi dispatch}
\label{sd_model}
\begin{table*}
\centering
\begin{tabular}{|c|c|}
  \hline
  Parameters&Description \\ \hline
  $N$ & the total number of vacant taxis\\ \hline
  $n$ & the number of  regions\\ \hline       
   $r^k \in \mathbb{R}^{1 \times n}$& the total number of predicted requests to be served by current vacant taxis at each region   \\ \hline
   $C^k \in [0,1] ^{n \times n}$ & matrix that describes taxi mobility patterns during one time slot\\ \hline 
   $P^0 \in \mathbb{R}^{N\times 2}$ & the initial positions of vacant taxis provided by GPS data\\ \hline
   $W_i \in \mathbb{R}^{n\times 2}$& preferred positions of the $i$-th taxi at $n$ regions\\\hline
   $\alpha \in \mathbb{R}^{N} $ & the upper bound of distance each taxi can drive for balancing the supply\\ \hline
   $\beta \in \mathbb{R}_+$ & the weight factor of the objective function \\ \hline  
   $R^k \in \mathbb{R}_+$ & total number of predicted requests in the city\\ \hline                                 
       Variables&Description \\ \hline    
      $X^k \in \{0,1\}^{N \times n}$ &  the dispatch order matrix that represents the region each vacant taxi should go \\ \hline     
    $P^k \in [0, 1]^{N \times n}$ & predicted positions of dispatched taxis at the end of time slot $k$ \\ \hline
    $d_i^k \in \mathbb{R}_+$ & lower bound of idle driving distance of the $i$-th taxi for reaching the dispatched location \\ \hline
\end{tabular}
\caption{Parameters and variables of the RHC problem~\eqref{opt2}.}
\label{opt_parameter}
\end{table*} 
We assume that the entire area of a city is divided into $n$ regions such as administrative sub-districts. We also assume that within a time slot $k$, the total number of passenger requests at the $j$-th region is denoted by $r^k_j$. We also use $r^k \triangleq [r^k_1,\dots, r^k_n] \in \mathbb{R}^{1 \times n}$ to denote the vector of all requests.  These are the demands we want to meet during time $k=1,\dots, T$ with minimal idle driving cost. Then the total number of predicted requests in the entire city is denoted by 
\\\centerline{$
R^k=\sum\limits_{j=1}^{n}r^k_j.
$}

We assume that there are total $N$ vacant taxis in the entire city that can be dispatched according to the real-time occupancy status of all taxis. The initial supply information consists of real-time GPS position of all available taxis, denoted by $P^0 \in \mathbb{R}^{N\times 2}$, whose $i$-th row $P_i^0 \in \mathbb{R}^{1\times 2}$ corresponds to the position of the $i$-th vacant taxi.  While the dispatch algorithm does not make decisions for occupied taxis,  information of occupied taxis affects the predicted total demand to be served by vacant taxis, and the interaction between the information of occupied taxis and our dispatch framework will be discussed in section~\ref{sec:algorithm}.

The basic idea of the dispatch problem is illustrated in Figure~\ref{optresult}. Specifically, each region has a predicted number of requests that need to be served by vacant taxis, as well as locations of all vacant taxis with IDs given by real-time sensing information. We would like to find a dispatch solution that balances the supply demand ratio, while satisfying practical constraints and not introducing large current and anticipated future idle driving distance. Once dispatch decisions are sent to vacant taxis, the dispatch center will wait for future computing a new decision problem until updating sensing information in the next period.

\begin{figure}[!t]
\vspace{-8pt}
\centering
\subfigure[A dispatch solution -- taxi $2$ goes to region $4$, and taxi $4$ goes to region $4$. ]
{\includegraphics [width=0.22\textwidth]{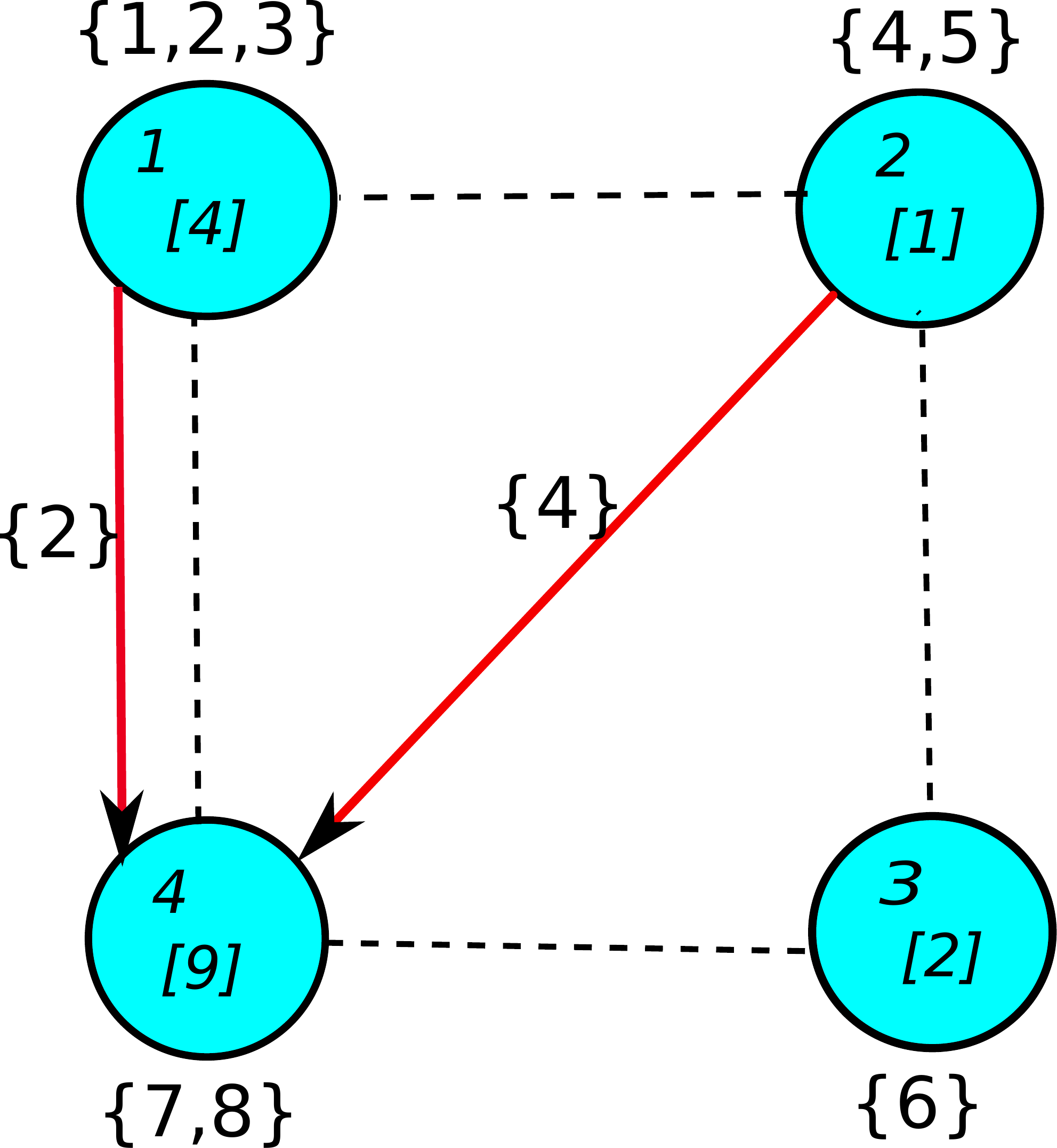}
\label{t2}
}
\subfigure[A dispatch solution -- taxi $2$ goes to region $4$, taxi $4$ goes to region $3$, and taxi $6$ goes to region $4$. ]
{\includegraphics [width=0.23\textwidth]{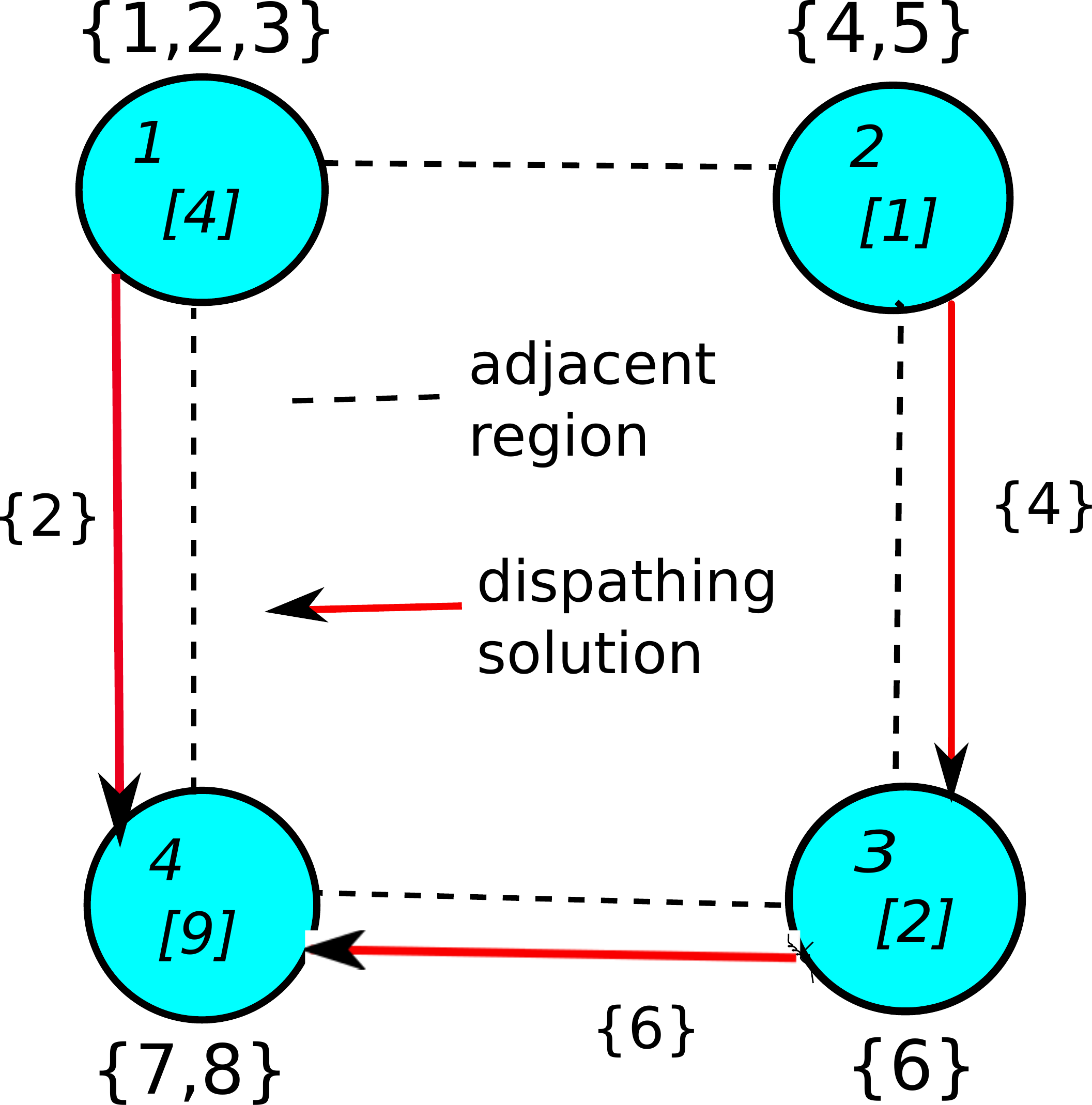}
\label{t3}
}
\vspace{-5pt}
\caption{Unbalanced supply and demand at different regions before dispatching and possible dispatch solutions. A circle represents a region, with a number of predicted requests ($[\cdot]$ inside the circle) and vacant taxis ($\{$ taxi IDs $\}$ outside the circle) before dispatching. A black dash edge means adjacent regions. A red edge with a taxi ID means sending the corresponding vacant taxi to the pointed region according to the predicted demand.} 
\label{optresult}
\vspace{-8pt}
\end{figure}

\subsection{Optimal dispatch under operational constraints}
The decision we want to make is the region each vacant taxi should go.  With the above initial information about supply and predicted demand, we define a  binary matrix $X^k \in\{0,1\}^{N \times n}$ as the dispatch order matrix, where $X^k_{ij}=1$ if and only if the $i$-th taxi is sent to the $j$-th region during time $k$. Then
\\\centerline{$
X^k \mathbf{1}_n=\mathbf{1}_N,\quad k=1,\dots, T
$}
must be satisfied, since every taxi should be dispatched to one region during time $k$.

\subsubsection{Two objectives}
One design requirement is to fairly serve the customers at different regions of the city --- vacant taxis should be allocated to each region according to predicted demand across the entire city during each time slot. To measure how supply matches demand at different regions, we use the metric---supply demand ratio. For region $j$, its supply demand ratio for time slot $k$ is defined as the total number of vacant taxis decided by the total number of customer requests during time slot $k$.
When the supply demand ratio of every region equals to that of the whole city, we have the following equations for $j=1, \dots, n$,  $k=1,\dots,T$,
\begin{align}
\frac{\mathbf{1}_N^T X_{\cdot j}^k}{r_j^k}=\frac{N}{ R^k}, \iff \frac{\mathbf{1}_N^T X_{\cdot j}^k}{N}=\frac{r_j^k}{ R^k},
\label{xr}
\end{align}
For convenience, we rewrite equation~\eqref{xr} as the following equation about two row vectors 
\begin{align}
\frac{1}{N} \mathbf{1}^T_N X^k=\frac{1}{R^k}r^k, \quad k=1,\cdots,T.
\label{equal}
\end{align} 
However, equation~\eqref{equal} can be too strict if used as a constraint, and there may be no feasible solutions satisfying~\eqref{equal}. This is because decision variables $X^k, k=1,\dots, T$ are integer matrices, and taxis' driving speed is limited that they may not be able to serve the requests from any arbitrary region during time slot $k$. Instead, we convert the constraint~\eqref{equal} into a soft constraint by introducing a supply-demand mismatch penalty function $J_E$ for the requirement that the supply demand ratio should be balanced across the whole city, and one objective of the dispatch problem is to minimize the following function
\begin{align}
J_E=\sum_{k=1}^{T}\left\|\frac{1}{N}\mathbf{1}^T_N X^k - \frac{1}{R^k}r^k\right\|_1.
\label{JE}
\end{align}

The other objective is to reduce total idle driving distance of all taxis. The process of traversing from the initial location to the dispatched region will introduce an idle driving distance for a vacant taxi, and we consider to minimize such idle driving distance associated with meeting the dispatch solutions. 

We begin with estimate the total idle driving distance associated with meeting the dispatch solutions. For the convenience of routing process,  the dispatch center is required to send the GPS location of the destination to vacant taxis. The decision variable $X^k$ only provides the region each vacant taxi should go, hence we map the region ID to a specific longitude and latitude position for every taxi. In practice, there are taxi stations on roads in metropolitan areas, and we assume that each taxi has a preferred station or is randomly assigned one at every region by the dispatch system. 
We denote the preferred geometry location matrix for the $i$-th taxi by $W_{i} \in \mathbb{R}^{n\times 2}$, and $[W_i]_j$, where each row of $W_i$ is a two-dimensional geometric position on the map. The $j$-th row of $W_i$ is the dispatch position sent to the $i$-th taxi when $X^k_{ij}=1$.  

Once $X_i^k$ is chosen, then the $i$-th taxi will go to the location $X_i^k W_i$,  because the following equation holds
\\\centerline{$
X^k_i W_i=\sum_{q\neq j}X^k_{iq}[W_i]_q+X^k_{ij}[W_i]_j=[W_i]_j \in \mathbb{R}^{1\times 2}.
$}
With a binary vector $X^k_i$ that $X^k_{ij}=1$, $X^k_{iq}=0$ for $q \neq j$, we have $X^k_{iq}W_i=[0\ 0]$ for $q \neq j$.  Since $W_i$ does not need to change with time $k$, the preferred location of each taxi at every region in the city is stored as a matrix $\mathbf{W}$, stored in the dispatch center before the process of calculating dispatch solutions starts. When updating information of vacant taxis, matrix $W_i$ is also updated for every current vacant taxi $i$.

The initial position $P^0_i$ is provided by GPS data. Traversing from position $P^0_i$ to position $X^1_i W_i$ for predicted demand will introduce a cost, since the taxi drives towards the dispatched locations without picking up a passenger. Hence, we consider minimizing the total idle driving distance introduced by dispatching taxis. 
Driving in a city is approximated as traveling on a grid road. To estimate the distance without knowing the exact path, we use the Manhattan norm or one norm between two geometric positions, which is widely applied as a heuristic distance in path planning algorithms~\cite{manhattan}. We define $d_i^k \in \mathbb{R}$ as the estimated idle driving distance of the $i$-th taxi for reaching the dispatched location $X_i^k W_i$. For $k=1$, a lower bound of $d^1_i$ is given by
\begin{align}
d^1_i \geqslant \|P^0_i-X^1_iW_i\|_1, \quad i=1,\dots, N.
\label{d0}
\end{align}

For $k\geqslant 2$, to estimate the anticipated future idle driving distance induced by reaching dispatched position $X_i^{k} W_i$ at time $k$, we consider the trip at the beginning of time slot $k$ starts at the end location of time slot $k-1$. However, during time $k-1$, taxis' mobility patterns are related to pick-up and drop-off locations of passengers, which are not directly controlled by the dispatch center. So we assume the predicted ending position for a pick-up location $X^{k-1}_iW_i$ during time $k-1$ is related to the starting position $X^{k-1}_iW_i$ as follows:\begin{align}
P_i^{k-1}=f^k(X^{k-1}_iW_i),\quad f^k: \mathbb{R}^{1\times 2} \to \mathbb{R}^{1\times 2},
\label{Pk}
\end{align}
where $f^k$ is a convex function, called a mobility pattern function. To reach the dispatched location $X_i^k W_i $ at the beginning of time $k$ from position $P^{k-1}_i$, the approximated driving distance is 
\begin{align}
d^k_i \geqslant \| f^k(X^{k-1}_iW_i)-X^{k}_i W_i\|_1. 
\label{uk}
\end{align}
The process to calculate a lower bound for $d_i^k$ is illustrated in Figure~\ref{u}. 

\begin{figure}[b!]
\vspace{-8pt}
\centering
\includegraphics [width=0.4\textwidth]{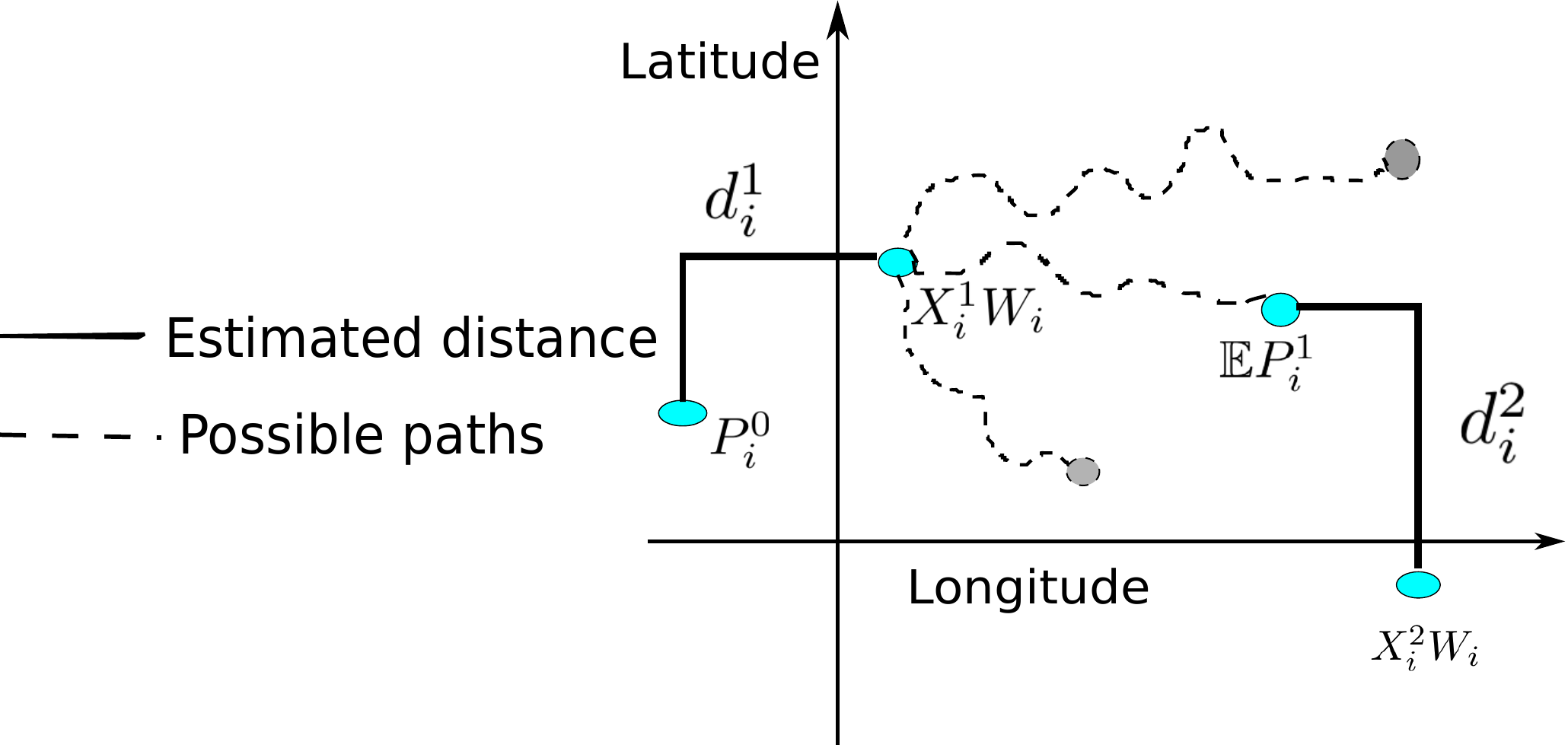}
\vspace{-5pt}
\caption{Illustration of the process to estimate idle driving distance to the dispatched location for the $i$-th taxi at $k=2$: predict ending location of $k=1$ denoted by $\mathbb{E}P^1_i$ in~\eqref{EP}, get the distance between locations $\mathbb{E}P^1_i$ and $X^2_i W_i$ denoted by $d^2_i$ in~\eqref{uk}.}
\label{u}
\end{figure}

Within time slot $k$, the distance that every taxi can drive should be bounded by a constant vector $\alpha^k \in \mathbb{R}^{N}$:
\\\centerline{$
d^k  \leqslant \alpha^k.
$}
Total idle driving distance of all vacant taxis though time $k=1,\dots, T$ to satisfy service fairness is then denoted by
\begin{align}
J_D=\sum\limits_{k=1}^{T}\sum_{i=1}^{N} d_i^k.
\label{Jd}
\end{align}

It is worth noting that the idle distance we estimate here is the region-level distance to pick up predicted passengers --- the distance is nonzero only when a vacant taxi is dispatched to a different region. We also require that the estimated distance is a closed form function of the locations of the original and dispatched regions, without knowledge about accurate traffic conditions or exact time to reach the dispatched region. Hence, in this work we use Manhattan norm to approximate the idle distance---it is a closed form function of the locations of the original and dispatched regions. When accessibility information of the road traffic network is considered in estimating street-level distances, for the case that a taxi may not drive on rectangular grids to pick up a passenger (for instance, when a U-turn is necessary), Lee et.al\ have proposed a shortest time path approach to pick up passengers in shortest time~\cite{Lee_review}.

\subsubsection{An RHC problem formulation}
Since there exists a trade-off between two objectives as discussed in Section~\ref{sec:prob_descrip}, we define a weight parameter $\beta^k$ when summing up the costs related to both objectives.
A list of parameters and variables is shown in Table~\ref{opt_parameter}. When mixed integer programming is not efficient enough for a large-scale taxi network regarding to the problem size, one standard relaxation method is replacing the constraint $X^k_{ij} \in \{0,1\}$ by $0 \leq X^k_{ij} \leq 1.$


To summarize, we formulate the following problem~\eqref{opt2} based on the definitions of variables, parameters, constraints and objective function
\begin{align}
\begin{split}
\underset{X^k, d^k}{\text{min.}}\quad& J= J_E+\beta J_D   \\
&\quad= \sum\limits_{k=1}^{T}\left(\left\|\frac{1}{N}\mathbf{1}^T_N X^k - \frac{1}{R^k}r^k\right\|_1+\beta^k \sum_{i=1}^{N} d_i^k\right)\\
\text{s.t}\quad 
&d^1_i\geqslant \|P^0_i-X^1_iW_i\|_1,\quad i=1,\dots,N,\\
&d^{k}_i\geqslant \|f^k(X^{k-1}_i W_i) -X^{k}_iW_i\|_1, \\
&i=1,\dots, N, \quad k= 2,\dots, T, \\
& d^k \leqslant \alpha^k, \quad k=1,2, \dots, T,\\
& X^k \mathbf{1}_n=\mathbf{1}_N, \quad k=1, 2, \dots, T,\\
&0 \leqslant X^k_{ij} \leqslant 1, i\in\{1,\dots,N\},  j \in \{1, \dots, n\}.
\end{split}
\label{opt2}
\end{align}
After getting an optimal solution $X^1$ of problem~\eqref{opt2}, for the $i$-th taxi, we may recover binary solution through rounding by setting the largest value of $X^1_{i}$ to $1$, and the others to $0$. This may violate the constraint of $d_i^0$, but since we set a conservative upper bound $\alpha^k$, and the rounding process will return a solution that satisfies $d^k_i \leqslant \alpha^k + \epsilon$ with bounded $\epsilon$, the dispatch solution can still be executed during time slot $k$.

\subsection{Discussions on the optimal dispatch formulation}

\subsubsection{Why use supply demand ratio as a metric}
An intuitive measurement of the difference between the number of vacant taxis and predicted total requests at all regions is:
\\\centerline{$
e=\sum\limits_{j=1}^{n} |s^k_j-r^k_j|,
$}
where $s^k_j$ is the total number of vacant taxis sent to the $j$-th region. However, when the total number of vacant taxis and requests are different in the city, this error $e$ can be large even under the case that more vacant taxis are already allocated to busier regions and fewer vacant taxis are left to regions with less predicted demand. We do not have an evidence whether the dispatch center already fairly allocates supply according to varying demand given the value of the above error $e$.

\subsubsection{The meaning of $\alpha^k$}
\label{alphak} 
For instance, when the length of time slot $k$ is one hour, and $\alpha^k$ is the distance one taxi can traverse during $20$ minutes of that hour, this constraint means a dispatch solution involves the requirement that a taxi should be able to arrive the dispatched position within $20$ minutes in order to fulfill predicted requests. With traffic condition monitoring and traffic speed predicting method~\cite{Jaillet}, $\alpha^k$ can be adjusted according to the travel time and travel speed information available for the dispatch system. This constraint also gives the dispatch system the freedom to consider the fact that drivers may be reluctant to drive idly for a long distance to serve potential customers, and a reasonable amount of distance to go according to predicted demand is acceptable. The threshold $\alpha^k$ is related to the length of time slot. In general, the longer a time slot is, the larger $\alpha^k$ can be, because of constraints like speed limit.

\subsubsection{One example of mobility pattern function $f^k$} When taxi's mobility pattern during time slot $k$ is described by a matrix $C^k \in \mathbb{R}^{n\times n}$ satisfying $\sum_{j=1}^{n} C_{ij}=1$,
where $C^k_{ij}$ is the probability that a vacant taxi starts within region $i$ will end within region $j$ during time $k$. From the queueing-theoretical perspective such probability transition matrix approximately describes passenger's mobility~\cite{mod}. 
Given $X^{k-1}_i$ and the mobility pattern matrix $C^{k-1} \in [0,1]^{n \times n}$, the probability of ending at each region for taxi $i$ is
\begin{align*}
p=\sum_{j=1}^{n}[C^{k-1}]_jI(X^{k-1}_{ij}=1)=X^{k-1}_iC^{k-1} \in \mathbb{R}^{1\times n},
\end{align*}
where the indicator function $I(X^{k-1}_{ij}=1)=1$ if and only if $X^{k-1}_{ij}=1$, and $[C^{k-1}]_j$ is the $j$-th row of $C^{k-1}$.  However, introducing a stochastic formula in the objective function will cause high computational complexity for a large-scale problem. Hence, instead of involving the probability of taking different paths in the objective function to formulate a stochastic optimization problem, we take the expected value of the position of $i$-th taxi by the end of time $k-1$ 
\begin{align}
P^{k-1}_i=\sum_{j=1}^{n}p_j[W_i]_j=pW_i=X_i^{k-1} C^{k-1} W_i.
\label{EP}
\end{align} 
Here $P^{k-1}_i \in \mathbb{R}^{1 \times 2}$ is a vector representing  a predicted ending location of the $i$-th taxi on the map at each dimension. Then a lower bound of idle driving distance for heading to $X_i^{k} W_i$ to meet demand during $k$ is given by the distance between $P^{k-1}_i$ defined as~\eqref{EP} and $X_i^{k} W_i$.
\begin{align}
d^k_i \geqslant \|(X_i^{k-1}C^{k-1} -X^{k}_i)W_i\|_1. 
\label{uk}
\end{align}
In particular, when the transition probability $C^{k},k=1,\dots, T$ is available, we can replace the constraint about $d_i^k$ by $d^{k}_i \geqslant \|(X^{k-1}_i C^{k-1} -X^{k}_i)W_i\|_1$. 
 
It is worth noting that $d^k_i$ is a function of $X_i^{k-1}$ and $X_i^k$, and the estimation accuracy of idle driving distance to dispatched positions $X_i^k$ ($k=2, \dots, T$) depends on the predicting accuracy of the mobility pattern during each time slot $k$, or $P^{k-1}_i$. The distance $d^1$ is calculated based on real-time GPS location $P^0$ and dispatch position $X^1$, and we use estimations $d^{2},\dots,d^{T}$ to measure the anticipated future idle driving distances for meeting requests.

The error of estimated $C^k$ mainly affects the choice of idle distance $d^k$ when the true ending region of a taxi by the end of time slot $k$ is not as predicted based on its starting region at time slot $k$. This is because $C^k$ determines the constraint for $d^k$ ($k=2,3,\dots, T$) as described by inequality~\eqref{uk}. However, the system also collects real-time GPS positions to make a new decision based on the current true positions of all taxis, instead of only applying predicted locations provided by the mobility pattern matrix. According to constraint~\eqref{d0} distance $d^1$ is determined by GPS sensing data $P^0$ and dispatch decision $X^1$, and only  $X^1$ will be executed sent to vacant taxis as the dispatch solutions after the system solving problem~\eqref{opt2}. From this perspective, real-time GPS and occupancy status sensing data is significant to improve the system's performance when we utilize both historical data and real-time sensing data. We also consider the effect of an inaccurate mobility pattern estimation $C^k$ when choosing the prediction time horizon $T$ --- large prediction horizon will induce accumulating prediction error in matrix $C^k$ and the dispatch performance will even be worse. Evaluation results in Section~\ref{sec:simulation} show how real-time sensing data helps to reduce total idle driving distance and how the mobility pattern error of different prediction horizon $T$ affects the system's performance. 


\subsubsection{Information on road congestion and passenger destination}
When road congestion information is available to the dispatch system, function in~\eqref{Pk} can be generalized to include real-time congestion information. For instance, there is a high probability that a taxi stays within the same region during one time slot under congestions. 

We do not assume that information of passenger's destination is available to the system when making dispatch decisions, since many passengers just hail a taxi on the street or at taxi stations instead of reserving one in advance in metropolitan areas. When the destination and travel time of all trips are provided to the dispatch center via additional software or devices as prior knowledge, the trip information is incorporated to the definition of ending position function~\eqref{Pk} for problem formulation~\eqref{opt2}. With more accurate trip information, we get a better estimation of future idle driving distance when making dispatch decisions for $k=1$.

\subsubsection{Customers' satisfaction under balanced supply demand ratio}
The problem we consider in this work is reaching fair service to increase global level of customers' satisfaction, which is indicated by a balanced supply demand ratio across different regions of one city, instead of minimizing each individual customer's waiting time when a request arrives at the dispatch system. Similar service fairness metric has been applied in mobility on demand systems~\cite{mod}, and supply demand ratio considered as an indication of utilization ratio of taxis is also one regulating objective in taxi service market~\cite{Yang_review}. For the situation that taxi $i$ will not pick up passengers in its original region but will be dispatched to another region, the dispatch decision results from the fact that global customers' satisfaction level will be increased. For instance, when the original region of taxi $i$ has a higher supply demand ratio than the dispatched region, going to the dispatched region will help to increase customer's satisfaction in that region. By sending taxi $i$ to some other region, customers' satisfaction in the dispatched region can be increased, and the value of the supply-demand cost-of-mismatch function $J_E$ can be reduced without introducing much extra total idle driving distance $J_D$.

\subsection{Robust RHC formulations}
Previous work has developed multiple ways to learn passenger demand and taxi mobility patterns~\cite{Jaillet,time_predict,Huang:2012}, and accuracy of the predicted model will affect the results of dispatch solutions. We do not have perfect knowledge of customer demand and taxi mobility models in practice, and the actual spatial-temporal profile of passenger demands can deviate from the predicted value due to random factors such as disruptive events. Hence, we discuss formulations of robust taxi dispatch problems based on~\eqref{opt2}.

Formulation~\eqref{opt2} is one computationally tractable approach to describe the design requirements with a nominal model. One advantage of the formulation~\eqref{opt2} is its flexibility to adjust the constraints and objective function according to different conditions. With prior knowledge of scheduled events that disturb the demand or mobility pattern of taxis, we are able to take the effects of the events into consideration by setting uncertainty parameters.
For instance, when we have basic knowledge that total demand in the city during time $k$ is about $\tilde{R}^k$, but each region $r^k_j$ belongs to some uncertainty set, denoted by an entry wise inequality
\\\centerline{$
R^k_{1} \preceq r^k \preceq R^k_2,
$}
given $R^k_1\in \mathbb{R}^{n}$ and $R^k_2\in\mathbb{R}^n$. Then 
\begin{align}
r_j^k \in [R^k_{1j},R^k_{2j}], j=1,\dots, n
\label{r_jk_uncertainty}
\end{align}
is an uncertainty parameter instead of a fixed value as in problem~\eqref{opt2}. 
Without additional knowledge about the change of total demand in the whole city, we denote $\tilde{R}^k$ as the approximated total demand in the city under uncertain $r^k_j$ for each region. By introducing interval uncertainty~\eqref{r_jk_uncertainty} to $r^k$ and fixing $\tilde{R}^k$ on the denominator, we have the following robust optimization problem~\eqref{robust_mpc}
\begin{align}
\begin{split}
\underset{X^k, d^k}{\text{min.}}\quad& \underset{R^k_{1} \preceq r^k \preceq R^k_2}{\text{max}}\sum\limits_{k=1}^{T}\left(\left\|\frac{1}{N}\mathbf{1}^T_N X^k - \frac{1}{\tilde{R}^k}r^k\right\|_1+\beta^k\sum_{i=1}^{N} d_i^k\right)\\
\text{s.t.}\quad &\text{constraints of problem~\eqref{opt2}}.
\end{split}
\label{robust_mpc}
\end{align}
The robust optimization problem~\eqref{robust_mpc} is computationally tractable, and we have the following Theorem~\ref{solve_robust} to show the equivalent form to provide real-time dispatch decision.
\begin{theorem}
The robust RHC problem~\eqref{robust_mpc} is equivalent to the following computationally efficient convex optimization problem
\begin{align}
\begin{split}
\underset{X^k, d^k, t^k}{\mathrm{min}}\quad &J'=\sum\limits_{k=1}^{T}\left(\sum_{j=1}^{n} t_j^k+\beta^k\sum_{i=1}^{N} d_i^k\right)\\
\mathrm{s.t}\quad &t^k_j\geq \frac{\mathbf{1}_N X_{\cdot j}^k}{N}-\frac{R^k_{1j}}{\tilde{R}^k}, 
                                        t^k_j \geq \frac{R^k_{1j}}{\tilde{R}^k}-\frac{\mathbf{1}_N X_{\cdot j}^k}{N},\\
                                        &t^k_j \geq \frac{\mathbf{1}_N X_{\cdot j}^k}{N}- \frac{R^k_{2j}}{\tilde{R}^k},
                                         t^k_j \geq  \frac{R^k_{2j}}{\tilde{R}^k}-\frac{\mathbf{1}_N X_{\cdot j}^k}{N},\\ 
                                        &j=1,\dots, n,\quad, k=1,\dots,T,\\
&\mathrm{constraints\ of\ problem~\eqref{opt2}}.
\end{split}
\label{robust_mpc_conv}
\end{align}
\label{solve_robust}
\end{theorem}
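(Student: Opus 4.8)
The plan is to exploit the separable, convex structure of the inner maximization in~\eqref{robust_mpc} and then linearize it through an epigraph reformulation. First I would observe that the term $\beta^k\sum_{i=1}^N d_i^k$ does not depend on the uncertain demand $r^k$, so it factors outside the inner maximum unchanged; only the $\ell_1$ penalty $\bigl\|\frac{1}{N}\mathbf{1}_N^T X^k-\frac{1}{\tilde R^k}r^k\bigr\|_1$ is affected by the worst-case demand. Expanding this norm componentwise rewrites the worst case as $\max_{R_1^k\preceq r^k\preceq R_2^k}\sum_{k=1}^T\sum_{j=1}^n\bigl|\frac{\mathbf{1}_N^T X_{\cdot j}^k}{N}-\frac{r_j^k}{\tilde R^k}\bigr|$, which isolates exactly where the uncertainty enters.

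The key structural fact is that the uncertainty set $\{r^k:R_1^k\preceq r^k\preceq R_2^k\}$ is a hyperrectangle, i.e.\ a product of independent intervals $r_j^k\in[R_{1j}^k,R_{2j}^k]$, and the summand has no cross terms coupling distinct $(j,k)$. Hence the joint maximization separates into independent scalar maximizations, one per component: $\sum_{j}\max_{r_j^k\in[R_{1j}^k,R_{2j}^k]}\bigl|\frac{\mathbf{1}_N^T X_{\cdot j}^k}{N}-\frac{r_j^k}{\tilde R^k}\bigr|$. For fixed $X$, each scalar term is a convex function of $r_j^k$ (an affine map composed with $|\cdot|$), so its maximum over an interval is attained at an endpoint. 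The inner max therefore equals $\max\bigl\{\,|\frac{\mathbf{1}_N^T X_{\cdot j}^k}{N}-\frac{R_{1j}^k}{\tilde R^k}|,\,|\frac{\mathbf{1}_N^T X_{\cdot j}^k}{N}-\frac{R_{2j}^k}{\tilde R^k}|\,\bigr\}$, and expanding each absolute value as a max of two signed affine expressions produces precisely the four affine quantities appearing in the $t_j^k$ constraints of~\eqref{robust_mpc_conv}.

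Finally I would introduce, for each $(j,k)$, an epigraph variable $t_j^k$ required to lie above all four affine expressions. These four inequalities force $t_j^k\ge\max\{\cdots\}$, and since the outer problem minimizes $\sum_{j,k}t_j^k$, each $t_j^k$ is driven down to exactly that maximum at optimality. Substituting back shows that $J'$ in~\eqref{robust_mpc_conv} equals the worst-case objective of~\eqref{robust_mpc} over the same feasible $(X^k,d^k)$ (the constraints of~\eqref{opt2} carry over verbatim), which establishes equality of optimal values and of minimizers. Convexity of~\eqref{robust_mpc_conv} is then immediate, as the added constraints are affine in $(X,t)$ and the objective is linear.

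I expect the main point to require careful statement rather than heavy computation: justifying that the inner $\max$ and outer $\min$ can be merged via the epigraph trick. This rests on two facts I would make explicit---the separability of the box maximization, so that a single scalar variable $t_j^k$ per component suffices instead of a jointly coupled worst case, and endpoint attainment from convexity, so that only the two extreme demand values $R_{1j}^k,R_{2j}^k$ need be enumerated rather than a continuum. I would also verify that replacing the max-of-absolute-values by four one-sided linear inequalities neither enlarges the feasible set nor changes the optimal value, which holds because at the optimum the binding inequality reproduces the active endpoint and sign.
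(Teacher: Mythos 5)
Your proposal is correct and follows essentially the same route as the paper's proof: isolate the $r^k$-dependent $\ell_1$ term, separate the box-constrained maximization componentwise, reduce each scalar maximization to its interval endpoints (the paper does this by explicit casework at the midpoint $(a+b)/2$, you by convexity/endpoint attainment, which is the same fact), and then linearize via epigraph slack variables $t_j^k$. Your added remarks on why the epigraph variables are tight at optimality make explicit a step the paper leaves implicit, but the argument is the same.
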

\begin{proof}
In the objective function, only the first term is related to $r^k$. To avoid the maximize expression over an uncertain $r^k$, we first optimize the term over $r^k$ for any fixed $X^k$. Let $X^k_{\cdot j}$ represent the $j$-th column of $X^k$, then
\begin{align}
\begin{split}
&\underset{R^k_{1} \preceq r^k \preceq R^k_2}{\text{max}}\left\|\frac{1}{N} \mathbf{1}^T_N X^k- \frac{1}{\tilde{R}^k}r^k\right\|_1 \\
=&\underset{R^k_{1} \preceq r^k \preceq R^k_2}{\text{max}}\sum_{j=1}^{n}\left|\frac{1}{N} \mathbf{1}^T_N X^k_{\cdot j}- \frac{r_j^k}{\tilde{R}^k}\right|\\
=& \sum_{j=1}^{n}\underset{r_j^k \in [R^k_{1j},R^k_{2j}]}{\text{max}}\left|\frac{1}{N} \mathbf{1}^T_N X^k_{\cdot j}- \frac{r_j^k}{\tilde{R}^k}\right|. 
\end{split}
\label{max}
\end{align}
The second equality holds because each $r^k_j$ can be optimized separately in this equation.
For $R^k_{1j}\leq r^k_j \leq R^k_{2j}$, we have 
\begin{align*}
\frac{R^k_{1j}}{\tilde{R}^k}\leq \frac{r^k_j}{\tilde{R}^k} \leq \frac{R^k_{2j}}{\tilde{R}^k}.
\end{align*}
Then the problem is to maximize each absolute value in~\eqref{max} for $j=1,\dots, n$. Consider the following problem for $x,a, b\in\mathbb{R}$ to examine the character of maximization problem over an absolute value:
\begin{align*} 
\underset{x_0 \in [a,b]}{\text{max}}|x-x_0|
&=\begin{cases} |x-a|,\quad \text{if}\  x>(a+b)/2\\
                               |x-b|,\quad \text{otherwise}\end{cases}\\
                                                         &=\text{max}\{|x-a|,|x-b|\}\\
                                                                         &=\text{max}\{x-a,a-x,x-b,b-x\}.
\end{align*} 

Similarly, for the problem related to $r^k_j$, we have
\begin{align}
\begin{split}
&\underset{r_j^k \in [R^k_{1j},R^k_{2j}]}{\text{max}}\left|\frac{\mathbf{1}_N X_{\cdot j}^k}{N} - \frac{r_j^k}{\tilde{R}^k}\right|\\ 
\quad= &\text{max}\left\{\left|\frac{\mathbf{1}_N X_{\cdot j}^k}{N}-\frac{R^k_{1j}}{\tilde{R}^k}\right|,\quad\left|\frac{\mathbf{1}_N X_{\cdot j}^k}{N}- \frac{R^k_{2j}}{\tilde{R}^k}\right|\right\}. 
\end{split}
\label{rk}
\end{align}
Thus, with slack variables $t^k \in \mathbb{R}^{n}$, 
we re-formulate the robust RHC problem as~\eqref{robust_mpc_conv}.
\end{proof}

Taxi mobility patterns during disruptive events can not be easily estimated (in general), however, we have knowledge such as a rough number of people are taking part in a conference or competition, or even more customer reservations because of events in the future. The uncertain set of predicted demand $r^k$ can be constructed purely from empirical data such as confidence region of the model, or external information about disruptive events. 
By introducing extra knowledge besides historical data model, the dispatch system responds to such disturbances with better solutions than the those without considering model uncertainties. Comparison of results of~\eqref{robust_mpc_conv} and problem~\eqref{opt2} is shown in Section~\ref{sec:simulation}. 


\section{RHC Framework Design}
\label{sec:algorithm}
Demand and taxi mobility patterns can be learned from historical data, but they are not sufficient to calculate a dispatch solution with dynamic positions of taxis when the positions of the taxis change in real time.  
Hence, we design an RHC framework to adjust dispatch solutions according to real-time sensing information in conjunction with the learned historical model. Real-time GPS and occupancy information then act as feedback by providing the latest taxi locations, and demand-predicting information for an online learning method like~\cite{Dmodel}.
Solving problem~\eqref{opt2} or~\eqref{robust_mpc} is the key iteration step of the RHC framework to provide dispatch solutions. 

RHC works by solving the cost optimization over the window $[1,T]$ at time $k=1$. Though we get a sequence of optimal solutions in $T$ steps -- ${X}^1, \dots, X^T$, we only send dispatch decisions to vacant taxis according to $X^1$. We summarize the complete process of dispatching taxis with both historical and real-time data as Algorithm~\ref{mpc_rt}, followed by a detail computational process of each iteration. The lengths of time slots for learning historical models ($t_1$) and updating real-time information ($t_2$) do not need to be the same, hence in Algorithm~\ref{mpc_rt} we consider a general case for different $t_1,t_2$.
\subsection{RHC Algorithm}
\begin{algorithm}
\caption{RHC Algorithm for real-time taxi dispatch}
\textbf{Inputs:} 
{Time slot length $t_1$ minutes, period of sending dispatch solutions $t_2$ minutes ($t_1/t_2$ is an integer); 
a preferred station location table $\mathbf{W}$ for every taxi in the network; 
estimated request vectors $\hat{r}(h_1)$, $h_1=1,\dots,1440/t_1$,
mobility patterns $\hat{f}(h_2)$, $h_2=1,\dots,1440/t_2$; prediction horizon $T \geq 1$.
}
\\\textbf{Initialization:}
{
The predicted requests vector $r=\hat{r}(h_1)$ for corresponding algorithm start time $h_1$.}\\
\While{Time is the beginning of a $t_2$ time slot} {
(1) Update sensor information for initial position of vacant taxis $P^0$ and occupied taxis $P'^0$, total number of vacant taxis $N$, preferred dispatch location matrices $W_i$.\\
\If{time is the beginning of an $h_1$ time slot}
{Calculate $\hat{r}(h_1)$ if the system applies an online training method; count total number of occupied taxis $n_o(h_1)$; update vector $r$.}
(2) Update the demand vectors $r^k$ based on predicted demand $\hat{r}(h_1)$ and potential service ability of $n_o(h_1)$ occupied taxis; update mobility functions $f^k(\cdot)$ (for example, $C^k$), set up values for idle driving distance threshold $\alpha^k$ and objective weight $\beta^k$, $k=1,2,\dots, T$. 
%
(3) 
\eIf{there is knowledge of demand $r^k$ as an uncertainty set ahead of time}
   {solve problem~\eqref{robust_mpc_conv};}
   {solve problem~\eqref{opt2} for a certain demand model\;}
(4) Send dispatch orders to vacant taxis according to the optimal solution of matrix $X^1$. Let $h_2=h_2+1$.}
\textbf{Return:}{Stored sensor data and dispatch solutions.} 
\label{mpc_rt}
\end{algorithm}
\begin{remark}
Predicted values of requests $\hat{r}(h_1)$ depend on the modeling method of the dispatch system. For instance, if the system only applies historical data set to learn each $\hat{r}(h_1)$, $\hat{r}(h_1)$ is not updated with real-time sensing data. When the system applies online training method such as~\cite{Dmodel} to update  $\hat{r}(h_1)$ for each $h_1$, values of $r$, $r^k$ are calculated based on the real-time value of $\hat{r}(h_1)$.
\end{remark}
\subsubsection{Update $r$}
We receive sensing data of both occupied and vacant taxis in real-time. 
Predicted requests that vacant taxis should serve during $h_1$ is re-estimated at the beginning of each $h_1$ time. 
 To approximate the service capability when an occupied taxi turns into vacant during time $h_1$, we define the total number of drop off events at different regions as a vector $dp(h_1) \in \mathbb{R}^{n\times1}$.  Given $dp(h_1)$, the probability that a drop off event happens at region $j$ is 
\begin{align}
pd_j(h_1)=dp_j(h_1)/ \mathbf{1}_n dp(h_1),
\label{pd}
\end{align} 
where $dp_j(h_1)$ is the number of drop off events at region $j$ during $h_1$.  
We assume that an occupied taxi will pick up at least one passenger within the same region after turning vacant, and we approximate future service ability of occupied taxis at region $j$ during time $h_1$ as
\\\centerline{$
r_{oj}(h_1)=\ceil*{pd_j(h_1)\times n_{o}(h_1)},
$}
where $\ceil*{\cdot}$ is the ceiling function, $n_{o} (h_1)$ is the total number of current occupied taxis at the beginning of time $h_1$ provided by real-time sensor information of occupied taxis. Let 
\\\centerline{$
r=\hat{r}(h_1)-r_o(h_1),
$}
then the estimated service capability of occupied taxis is deducted from $r$ for time slot $h_1$. 

\subsubsection{Update $r^k$ for problem~\eqref{opt2}}
We assume that requests are uniformly distributed during $h_1$. 
Then for each time $k$ of length $t_2$, if the corresponding physical time is still in the current $h_1$ time slot,   the request is estimated as an average part of $r$; else, it is estimated as an average part for time slot $h_1+1, h_1+2, \dots$, etc. The rule of choosing $r^k$ is 
\begin{align*}
r^k=\begin{cases}\frac{1}{H} r, \ &\text{if}\ (k+h_2-1)t_2\leq h_1t_1\\
                                          \frac{1}{H}\hat{r}\left(\ceil*{\frac{(k+h_2-1)t_2}{t_1}}\right), \ &\text{otherwise}\end{cases}
\end{align*}
where $H=t_1/t_2$.

\subsubsection{Update $r^k$ for robust dispatch~\eqref{robust_mpc_conv}}
When there are disruptive events and the predicted requests number is a range $\hat{r}(h_1) \in [\hat{R}_1(h_1), \hat{R}_2(h_1)]$, similarly we set the uncertain set of $r^k$ as the following interval for the computationally efficient form of robust dispatch problem~\eqref{robust_mpc_conv}
\begin{align*}
r^k \in \begin{cases}&\frac{1}{H} \left [\hat{R}_1(h_1)-r_o(h_1), \hat{R}_2(h_1)-r_o(h_1)\right], \\
 &\text{if}\ (k+h_2-1)t_2\leq h_1t_1, \\
                                         & \frac{1}{H}\left[\hat{R}_1(\ceil*{\frac{(k+h_2-1)t_2}{t_1}}), \hat{R}_2(\ceil*{\frac{(k+h_2-1)t_2}{t_1}})\right], \  \text{o.w.}\end{cases}
\end{align*}

\subsubsection{Spatial and temporal granularity of Algorithm~\ref{mpc_rt}}
The main computational cost of each iteration is on step (3), and $t_2$ should be no shorter than the computational time of the optimization problem. We regulate parameters according to experimental results based on a given data set, since there are no closed form equations to decide optimal design values of these parameters.

For the parameters we estimate from a given GPS dataset, the method we use in the experiments (but not restricted to it) will be discussed in Section~\ref{sec:simulation}. The length of every time slot depends on the predict precision of prediction, desired control outcome, and the available computational resources. We can set a large time horizon to consider future costs in the long run. However, in practice we do not have perfect predictions, thus a large time horizon may amplify the prediction error over time. 
Applying real-time information to adjust taxi supply is a remedy to this problem. Modeling techniques are beyond the scope of this work. If we have perfect knowledge of customer demand and taxi mobility models, we can set a large time horizon to consider future costs in the long run. However, in practice we do not have perfect predictions, thus a large time horizon may amplify the prediction error over time. Likewise, if we choose a small look-ahead horizon, then the dispatch solution may not count on idle distance cost of the future. Applying real-time information to adjust taxi supply is a remedy to this problem. 
With an approximated mobility pattern matrix $C^k$, the dispatch solution with large $T$ is even worse than small $T$.

\subsubsection{Selection process of parameters $\beta^k$, $\alpha^k$, and $T$}
The process of choosing values of parameters for Algorithm~\ref{mpc_rt} is a trial and adjusting  process, by increasing/decreasing the parameter value and observing the changing trend of the dispatch cost, till a desired performance is reached or some turning point occurs that the cost is not reduced any more. For instance, objective weight $\beta^k$ is related to the objective of the dispatch system, whether it is more important to reach fair service or reduce total idle distance. Some parameter is related to additional information available to the system besides real-time GPS and occupancy status data; for instance, $\alpha^k$ can be adjusted according to the average speed of vehicles or traffic conditions during time $k$ as discussed in subsection~\ref{alphak}. Adjustments of parameters such as objective weight $\beta^k$, idle distance threshold $\alpha^k$, prediction horizon $T$ when considering the effects of model accuracy, control objectives are shown in Section~\ref{sec:simulation}. A formal parameter selection method is a direction for future work.

\subsection{Multi-level dispatch framework}

We do not restrict the data source of customer demand -- it can be either predicted results or customer reservation records. Some companies provide taxi service according to the current requests in the queue. For reservations received by the dispatch center ahead of time, the RHC framework in Algorithm~\ref{mpc_rt} is compatible with this type of demand information --- we then assign value of the waiting requests vector $r^k$, taxi mobility function $f^k$ in~\eqref{opt2} according to the reservations, and the solution is subject to customer bookings. 

For customer requests received in real-time, a multi-level dispatch framework is available based on Algorithm~\ref{mpc_rt}. The process is as follows: run Algorithm~\ref{mpc_rt} with predicted demand $r^k$, and send dispatch solutions to vacant taxis. When vacant taxis arrive at dispatched locations, the dispatch center updates real-time demand such as bookings that recently appear in the system, then the dispatch method based on current demand such as the algorithm designed by Lee \emph{et al.}~\cite{Lee_review} can be applied. By this multi-level dispatch framework, vacant taxis are pre-dispatched at a regional level according to predicted demand using the RHC framework, and then specific locations to pick up a passenger who just  booked a taxi is sent to a vacant taxi according to the shortest time path~\cite{Lee_review}, with the benefit of real-time traffic conditions.

\section{Case Study: Method Evaluation}
\label{sec:simulation}
\begin{table*}[t]
\vspace{-8pt}
\centering
\begin{tabular}{|c|c|c|c|c|c|c|}
  \hline
 \multicolumn{4}{|c|} {Taxicab GPS Data set }                         & \multicolumn{3}{|c|}{Format} \\ \hline
  Collection Period& Number of Taxis & Data Size & Record Number & ID & Status & Direction \\ \hline
 05/17/08-06/10/08 & $500$ & $90MB$ & $1,000,000$ & Date and Time & Speed & GPS Coordinates \\ \hline
 \end{tabular}
      \caption{San Francisco Data in the Evaluation Section. Giant baseball game in AT\&T park on May 31, 2008 is a disruptive event we use for evaluating the robust optimization formulation.}
     \label{datasf}
\end{table*}
\begin{figure*}[!t]
	\centering
	\vspace{-15pt}
	\subfigure[Requests during weekdays] 
	{	
	\includegraphics [width=0.36\textwidth]{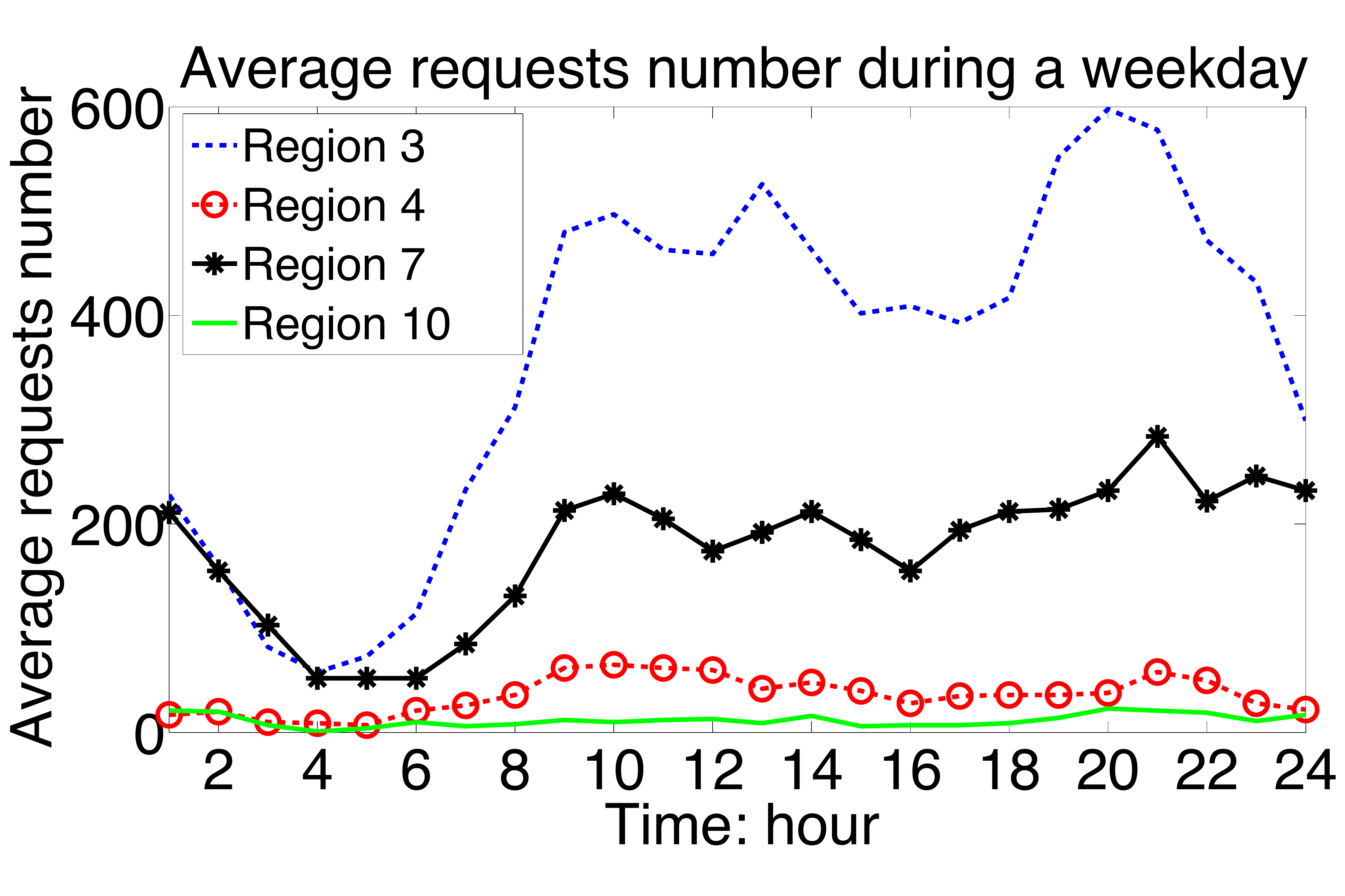} 
			\label{fig:r_weekday}
		}
\subfigure[Requests during weekends] 
{
				\includegraphics[width=0.39\textwidth]{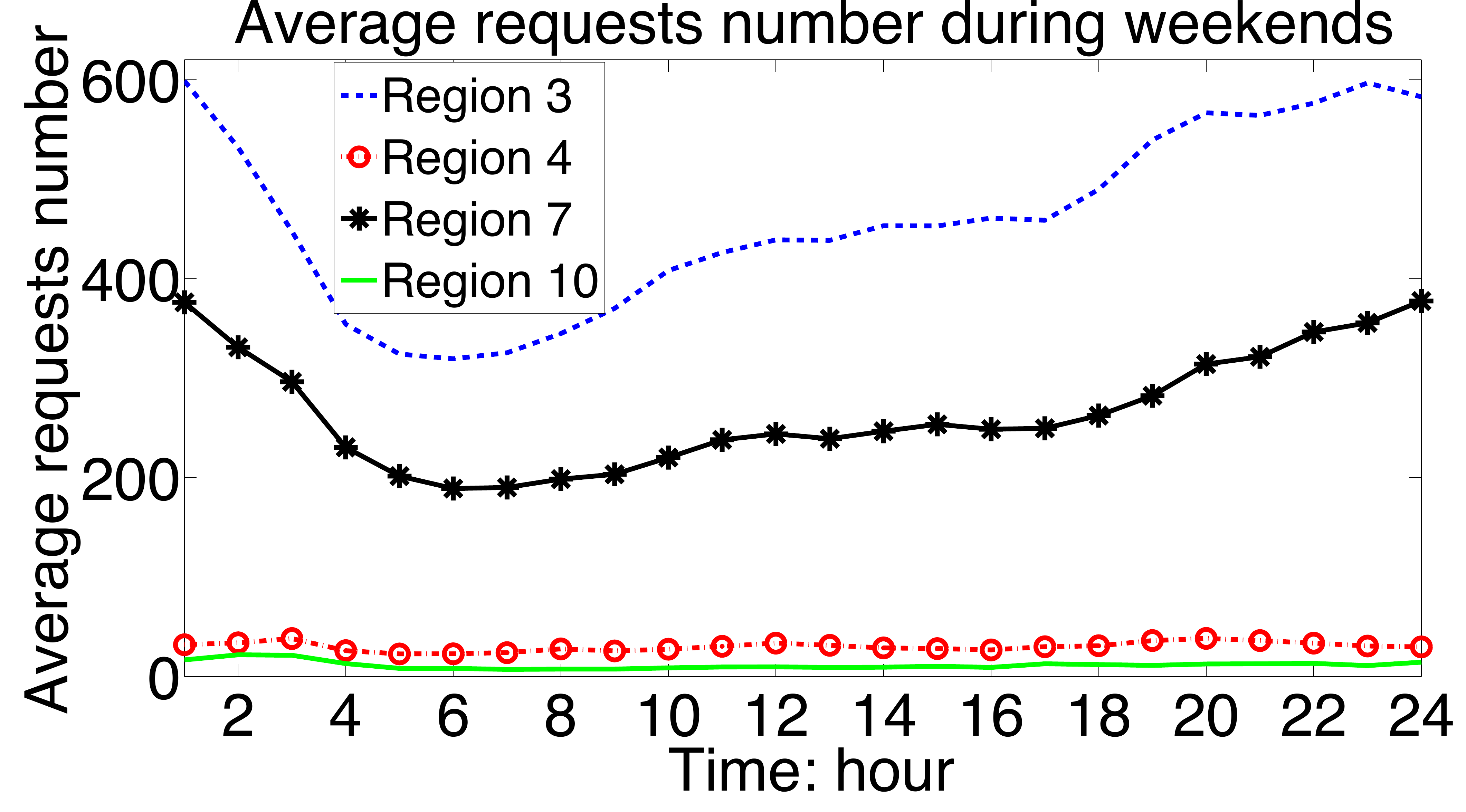}
				\label{fig:r_weekend}
	}
	\subfigure [Drop off during weekdays]
	{	
			\includegraphics [width=0.35\textwidth]{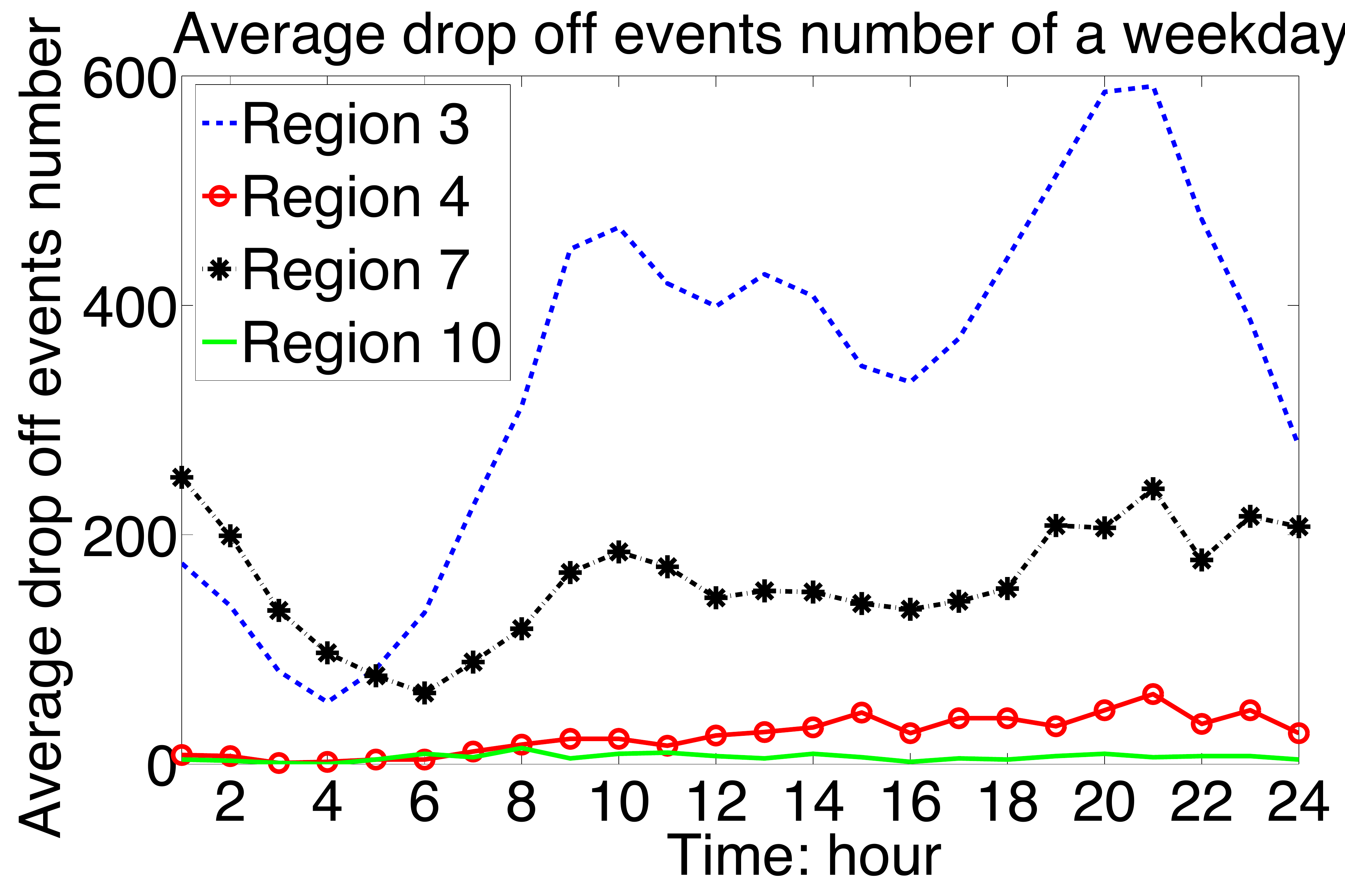} 
			\label{fig:dp_weekday}
		}
\subfigure[Drop off during weekends] 
{
				\includegraphics[width=0.37\textwidth]{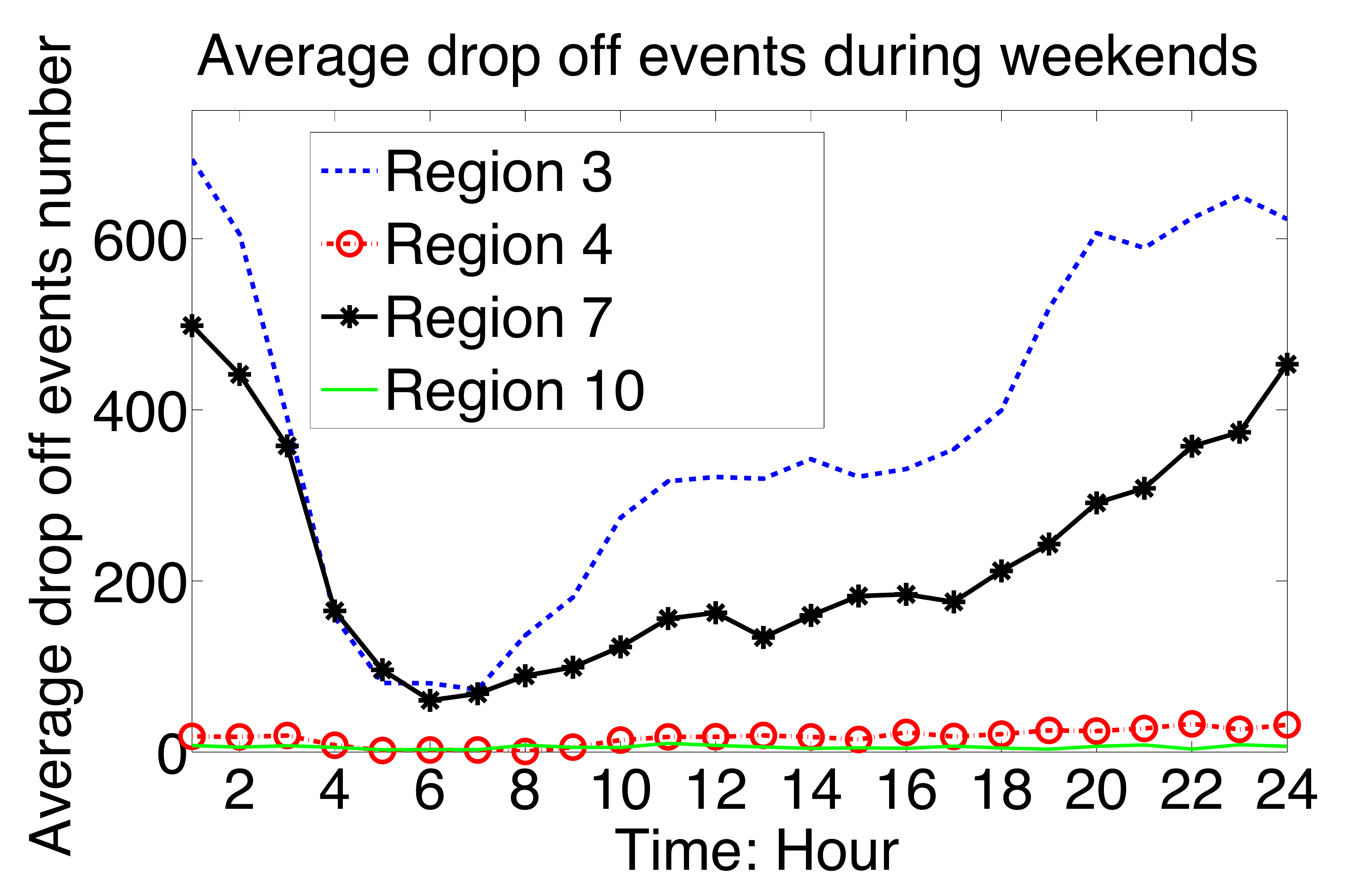}
				\label{fig:dp_weekend}
		}
\vspace{-10pt}
	\caption{Requests at different hours during weekdays and weekends, for four selected regions. A given historical data set provides basic spatiotemporal information about customer demands, which we utilize with real-time data to dispatch taxis. }
\label{fig:dp}
\end{figure*}

We conduct trace-driven simulations based on a San Francisco taxi data set~\cite{sf_data} summarized in Table~\ref{datasf}. In this data set, a record for each individual taxi includes four entries: the geometric position (latitude and longitude), a binary indication of whether the taxi is vacant or with passengers, and the Unix epoch time. With these records, we learn the average requests and mobility patterns of taxis, which serve as the input of Algorithm~\ref{mpc_rt}. We note that our learning model is not restricted to the data set used in this simulation, and other models~\cite{Dmodel}  and date sets can also be incorporated.

We implement Algorithm~\ref{mpc_rt} in Matlab using the optimization toolbox called CVX~\cite{cvx}. 
 We assume that all vacant taxis follow the dispatch solution and go to suggested regions. Inside a target region, we assume that a vacant taxi automatically picks up the nearest request recorded by the trace data, and we calculate the total idle mileage including distance across regions and inside a region by simulation. The trace data records the change of GPS locations of a taxi in a relatively small time granularity such as every minute. Moreover, there is no additional information about traffic conditions or the exact path between two consecutive data points when they were recorded. Hence, we consider the path of each taxi as connected road segments determined by each two consecutive points of the trace data we use in this section. Assume the latitude and longitude values of two consecutive points in the trace data are $[l_{x1}, l_{y1}]$ and $[l_{x2},l_{y2}]$, for a short road segment, the mileage distance between the two points (measured in one minute) is approximated as being proportional to the value $(|l_{x1}-l_{x2}|+|l_{y1}-l_{y2}|)$. The geometric location of a taxi is directly provided by GPS data. Hence, we calculate geographic distance directly from the data first, and then convert the result to mileage. 

Experimental figures shown in Subsection~\ref{mpc_eval} and~\ref{parameter} are average results of all weekday data from the data set~\ref{datasf}. Results shown in Subsection~\ref{robust_evaluate} are based on  weekend data.

\subsection{Predicted demand based on historical data} 
Requests during different times of a day in different regions vary a lot, and Figure~\ref{fig:r_weekday} and Figure~\ref{fig:r_weekend} compare bootstrap results of requests $\hat{r}(h_1)$ on weekdays and weekends for selected regions. This shows a motivation of this work--- necessary to dispatch the number of vacant taxis according to the demand from the perspective of system-level optimal performance. The detailed process is described as follows.

The original SF data set does not provide the number of pick up events, 
hence 
one intuitive way to determine a pick up (drop off) event is as follows. When the occupancy binary turns from $0$ to $1$ ($1$ to $0$), it means a pick up (drop off) event has happened. Then we use the corresponding geographical position to determine which region this pick up (drop off)
belongs to; use the time stamp data to decide during which time slot this pick up (drop off) happened. 
After counting the total number of pick up and drop off events during each time slot at every region, we obtain a set of vectors $r_{d'}(h_k), dp_{d'}(h_k), d'=1,\dots, d$, where $d$ is the number of days for historical data . In the following analysis, each time slot $h_1$ is the time slot of predicting demand model chosen by the RHC framework. The SF data set includes about $24$ days of data, so we use $d=18$ for weekdays, and $d=6$ for weekends. The bootstrap process for a given sample time number $B=1000$ is given as follows.

(a) Randomly sample a size $d$ dataset with replacement from the data set $\{r_1(h_1), \dots, r_d(h_1)\}$, calculate 
$$\hat{r}^1(h_1)=\frac{1}{d} \sum_{d'=1}^{d} r_{d'}(h_1),\  \text{for}\  h_1=1,\dots, 1440/h_1.$$

(b) Repeat step (a) for $(B-1)$ times, so that we have $B$ estimates for each $h_1$,
\\\centerline{$
\hat{r}^b(h_1),\quad b=1,\dots, B.
$}
The estimated mean value of $\hat{r}(h_1)$ based on $B$ samples is
\\\centerline{$ 
\hat{r}(h_1)=\frac{1}{B}\sum_{l=1}^{B}\hat{r}^l(h_1).
$}
 
(c)  Calculate the sample variance of the B estimates of $r(h_1)$ for each $h_1$,
\begin{align}
\hat{V}_{\hat{r}(h_1)}=\frac{1}{B}\sum_{b=1}^{B}(\hat{r}^b(h_1)-\frac{1}{B}\sum_{l=1}^{B}\hat{r}^l(h_1)).
\label{variance_r}
\end{align}

To estimate the demand range for robust dispatch problem~\eqref{robust_mpc_conv} according to historical data, we construct the uncertain set of demand $r^k$ based on the mean and variance of the bootstrapped demand model. For every region $j$, the boundary of demand interval is defined as
\begin{align}
\begin{split}
\tilde{R}_{1,j}(h_1)=\hat{r}_j(h_1)-\sqrt{\hat{V}_{\hat{r}(h_1),j}},\\
 \tilde{R}_{2,j}(h_1)=\hat{r}_j(h_1)+\sqrt{\hat{V}_{\hat{r}(h_1),j}},
 \end{split}
\end{align}
where $\hat{r}_j(h_1)$ is the average value of each step (b) and $\hat{V}_{\hat{r}(h_1),j}$ is the variance of estimated request number defined in~\eqref{variance_r}.
This one standard deviation range is used for evaluating the performance of robust dispatch framework in this work.

Estimated drop off events vectors $dp(h_1)$ are also calculated via a similar process. Figure~\ref{fig:dp_weekday} and~\ref{fig:dp_weekend} show bootstrap results of passenger drop off events $dp(h_1)$ on weekdays and weekends for selected regions.


 
%
\begin{table*}[t]
\centering
\begin{tabular}{|c|c|c|c|c|c|c|c|c|c|}
  \hline
 Region ID & 1 & 2 & 3 & 4 & 5 & 6 & 7 & 8\\ \hline
  Transit probability &    0.0032 &   0.0337  &  0.5144 &   0.0278 &   0.0132  &  0.0577  &0.1966  &  0.0263    \\  \hline
Region ID  & 9 & 10 & 11 & 12 & 13 & 14 & 15 & 16\\ \hline
Transit probability &      0.0001  &  0.0050  &  0.0340  &  0.0136  &  0.0018  &   0.0082  &  0.0248  &  0.0396 \\ \hline  
 \end{tabular}
     \caption{An estimation of state transition matrix by bootstrap: one row of matrix $\hat{C}(h_k)$}
     \label{boot_c}
     \vspace{-15pt}
\end{table*}

For evaluation convenience, we partition the city map to regions with equal area. To get the longitude and latitude position $W_{i}\in \mathbb{R}^{n\times 2}$ of vacant taxi $i$, we randomly pick up a station position in the city drawn from the uniform distribution.

\subsection{RHC with real-time sensor information}
\label{mpc_eval}
\begin{figure}[b!]
\vspace{-10pt}
\centering
\includegraphics [width=0.39\textwidth]{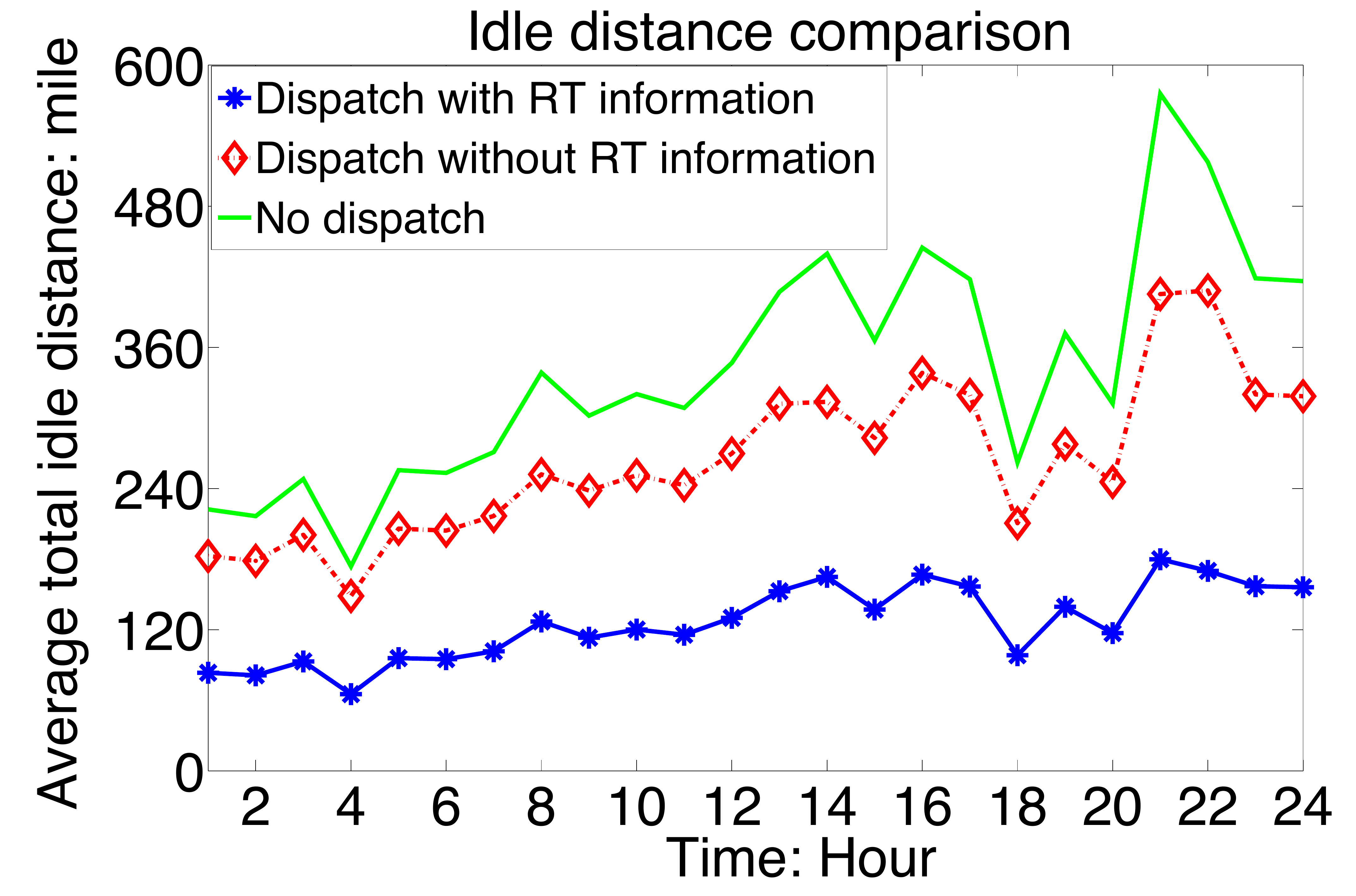}
\vspace{-10pt}
\caption{Average idle distance comparison for no dispatch, dispatch without real-time data, and dispatch with real-time GPS and occupancy information. Idle distance is reduced by $52\%$ given real-time information, compared with historical data without dispatch solutions.}
\label{compare_gps}
\end{figure}

To estimate a mobility pattern matrix $\hat{C}(h_2)$, we define a matrix $T(h_2)$, where $T(h_2)_{ij}$ is the total number of passenger trajectories that starting at region $i$ and ending at region $j$ during time slot $h_2$. We also apply bootstrap process to get $\hat{T}(h_2)$, and $\hat{C}(h_2)_{ij}=\hat{T}(h_2)_{ij}/(\sum\limits_{j}\hat{T}(h_2)_{ij}).$ Table~\ref{boot_c} shows one row of $\hat{C}(h_{2})$ for 5:00-6:00 pm during weekdays, the transition probability for different regions. The average cross validation error for estimated mobility matrix $\hat{C}(h_{2})$ of time slot $h_2$, $h_2=1,\dots, 24$ during weekdays is $34.8\%$, which is a reasonable error for estimating total idle distance in the RHC framework when real-time GPS and occupancy status data is available. With only estimated mobility pattern matrix $\hat{C}(h_{2})$, the total idle distance is reduced by $17.6\%$ compared with the original record without a dispatch method, as shown in Figure~\ref{compare_gps}. We also tested the case when the dispatch algorithm is provided with the true mobility pattern matrix $C^k$, which is impossible in practice, and the dispatch solution reduces the total idle distance by $68\%$ compared with the original record. When we only have estimated mobility pattern matrices instead of the true value to determine ending locations and potential total idle distance for solving problem~\eqref{opt2} or~\eqref{robust_mpc_conv}, updating real-time sensing data compensates the mobility pattern error and improves the performance of the dispatch framework.

Real-time GPS and occupancy data provides latest position information of all vacant and occupied taxis. When dispatching available taxis with true initial positions, the total idle distance is reduced by $52\%$ compared with the result without dispatch methods, as shown in Figure~\ref{compare_gps}, which is compatible with the performance when both true mobility pattern matrix $C^k$ and real-time sensing data are available. This is because the optimization problem~\eqref{opt2} returns a solution with smaller idle distance cost given the true initial position information $P^0$, instead of estimated initial position provided only by mobility pattern of the previous time slot in the RHC framework. Figure~\ref{compare_gps} also shows that even applying dispatch solution calculated without real-time information is better than non dispatched result. 


Based on the partition of Figure~\ref{sf_partition}, 
Figure~\ref{compare_ratio} shows that the supply demand ratio at each region of the dispatch solution with real-time information is closest to the supply demand ratio of the whole city, and the error $\left\|\frac{1}{N}\mathbf{1}^T_N X^k - \frac{1}{R^k}r^k\right\|_1$ is reduced by $45\%$ compared with no dispatch results. 
Even the supply demand ratio error of dispatching without real-time information is better than no dispatch solutions. We still allocate vacant taxis to reach a nearly balanced supply demand ratio regardless of their initial positions, but idle distance is increased without real-time data, as shown in Figure~\ref{compare_gps}.
Based on the costs of two objectives shown in Figures~\ref{compare_gps} and~\ref{compare_ratio}, the total cost is higher without real-time information, mainly results from a higher idle distance.

\begin{figure}[t!]
\centering
\includegraphics [width=0.30\textwidth]{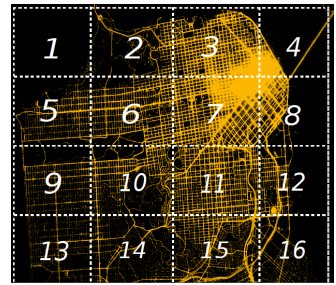}
\vspace{-10pt}
\caption{Heat map of passenger picking-up events in San Francisco (SF) with a region partition method. Region $3$ covers several busy areas, include Financial District, Chinatown, Fisherman Wharf. Region $7$ is mainly Mission District, Mission Bay, the downtown area of SF.} 
\label{sf_partition}
\vspace{-15pt}
\end{figure} 
\begin{figure}[b!]
\vspace{-15pt}
\centering
\includegraphics [width=0.39\textwidth]{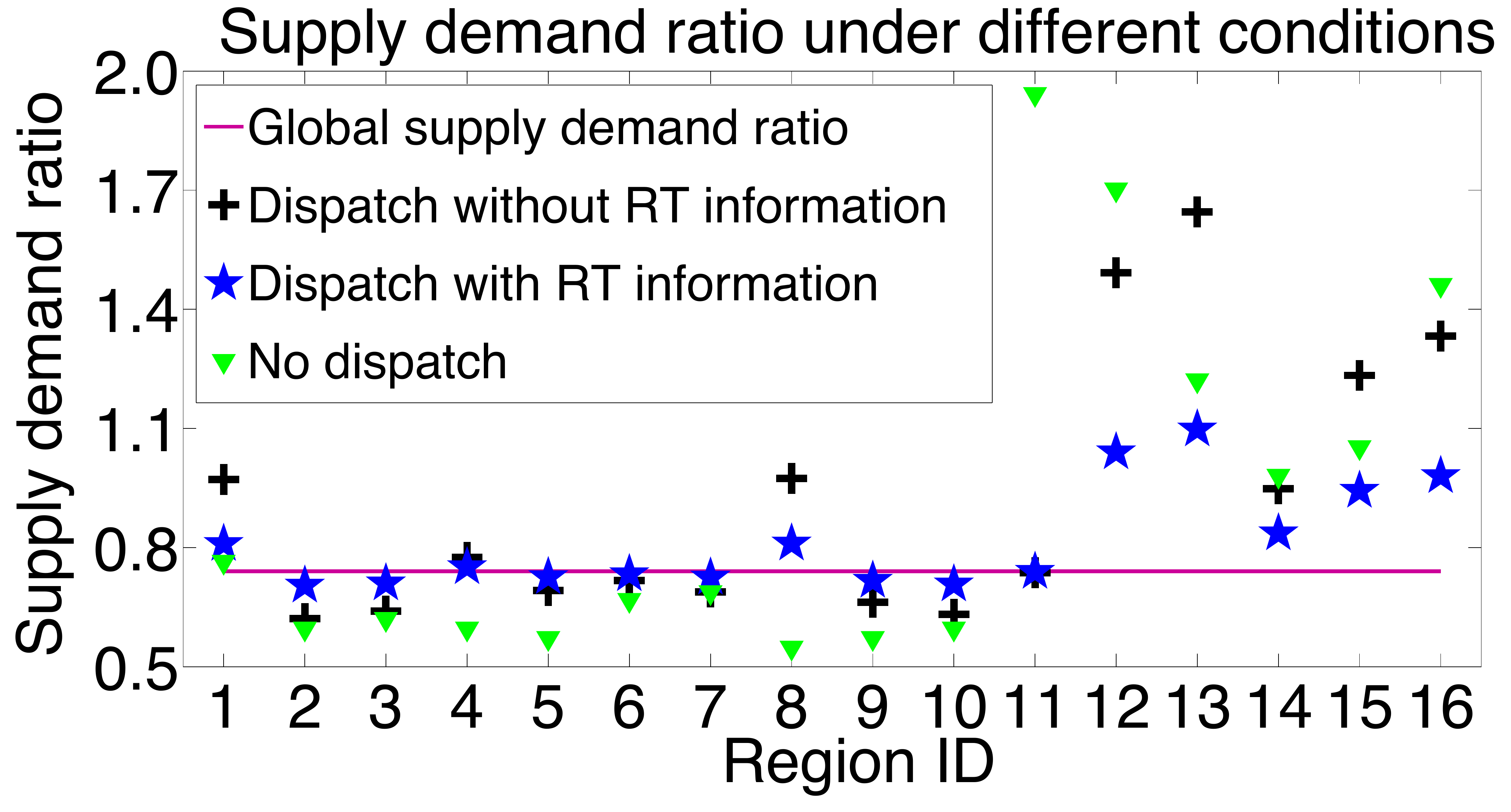}
\vspace{-10pt}
\caption{Supply demand ratio of the whole city and each region for different dispatch solutions. With real-time GPS and occupancy data, the supply demand ratio of each region is closest to the global level. The supply demand ratio mismatch error is reduced by $45\%$ with real-time information, compared with historical data without dispatch solutions.}
\label{compare_ratio}
\end{figure} 

\begin{figure}[b!]
\vspace{-15pt}
\centering
\includegraphics [width=0.37\textwidth]{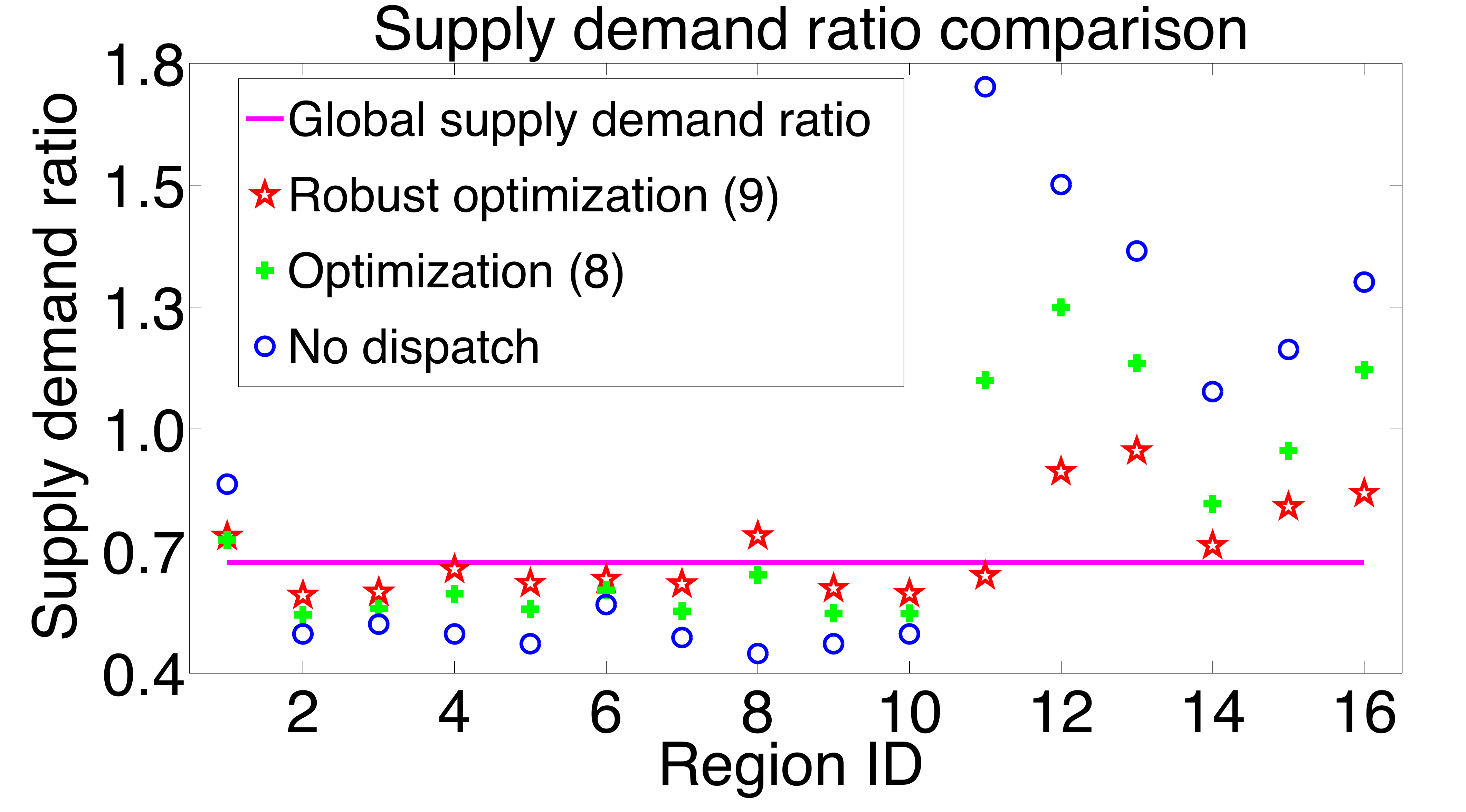}
\vspace{-10pt}
\caption{Comparison of supply demand ratio at each region under disruptive events, for solutions of robust optimization problems~\eqref{robust_mpc}, problem~\eqref{opt2} in the RHC framework, and historical data without dispatch. With the roust dispatch solutions of~\eqref{robust_mpc}, the supply demand ratio mismatch error is reduced by $46\%$.}
\label{fig:robust}
\end{figure}  
\subsection{Robust taxi dispatch}
\label{robust_evaluate}

One disruptive event of the San Francisco data set is Giant baseball at AT\&T park, and we choose the historical record on May 31, 2008 as an example to evaluate the robust optimization formulation~\eqref{robust_mpc}. Customer request number for areas near AT\&T park is affected,  
especially Region $7$ around the ending time of the game, which increases about $40\%$ than average value. 


Figure~\ref{fig:robust} shows that with dispatch solution of the robust optimization formulation~\eqref{robust_mpc},  the supply demand mismatch error $\left\|\frac{1}{N}\mathbf{1}^T_N X^k - \frac{1}{R^k}r^k\right\|_1$ is reduced by $25\%$ compared with the solution of~\eqref{opt2} and by $46\%$ compared with historical data without dispatch. The performance of robust dispatch solutions does not vary significantly and depends on what type of predicted uncertain demand is available when selecting the formulation of robust dispatch method. Even under solutions of~\eqref{opt2}, the total supply demand ratio error is reduced $28\%$ compared historical data without dispatch.  In general, we consider the factor of disruptive events in a robust RHC iteration, thus the system level supply distribution responses to the demand better under disturbance.

\subsection{Design parameters for Algorithm~\ref{mpc_rt}}
\label{parameter}
Parameters like the length of time slots, the region division function, the objective weight parameter and the prediction horizon $T$ of Algorithm~\ref{mpc_rt} affect the results of dispatching cost in practice. Optimal values of parameters for each individual data set can be different. Given a data set, we change one parameter to a larger/smaller value while keep others the same, and compare results to choose a suboptimal value of the varying parameter. We compare the cost of choosing different parameters for Algorithm~\ref{mpc_rt}, and explain how to adjust parameters according to experimental results based on a given historical data set with both GPS and occupancy record.

\textbf{How the objective weight of~\eqref{opt2} -- $\beta^k$ affects the cost:}
\begin{table}[t]
\centering
\begin{tabular}{|c|c|c|c|c|}
  \hline
 $\beta^k$ & 0 & 2 & 10  & without dispatch\\ \hline
 s/d error &0.645 &1.998  &2.049  &2.664 \\  \hline
 idle distance               &3.056 &1.718   &1.096  &4. 519  \\ \hline
total cost                &0.645   & 5.434   &13.009    & 47.854  \\ \hline  
 \end{tabular}
     \caption{Average cost comparison for different values of $\beta^k$. }
     \label{table_beta}
     \vspace{-20pt}
\end{table}
\begin{figure}[b!]
\vspace{-15pt}
\centering
\includegraphics [width=0.38\textwidth]{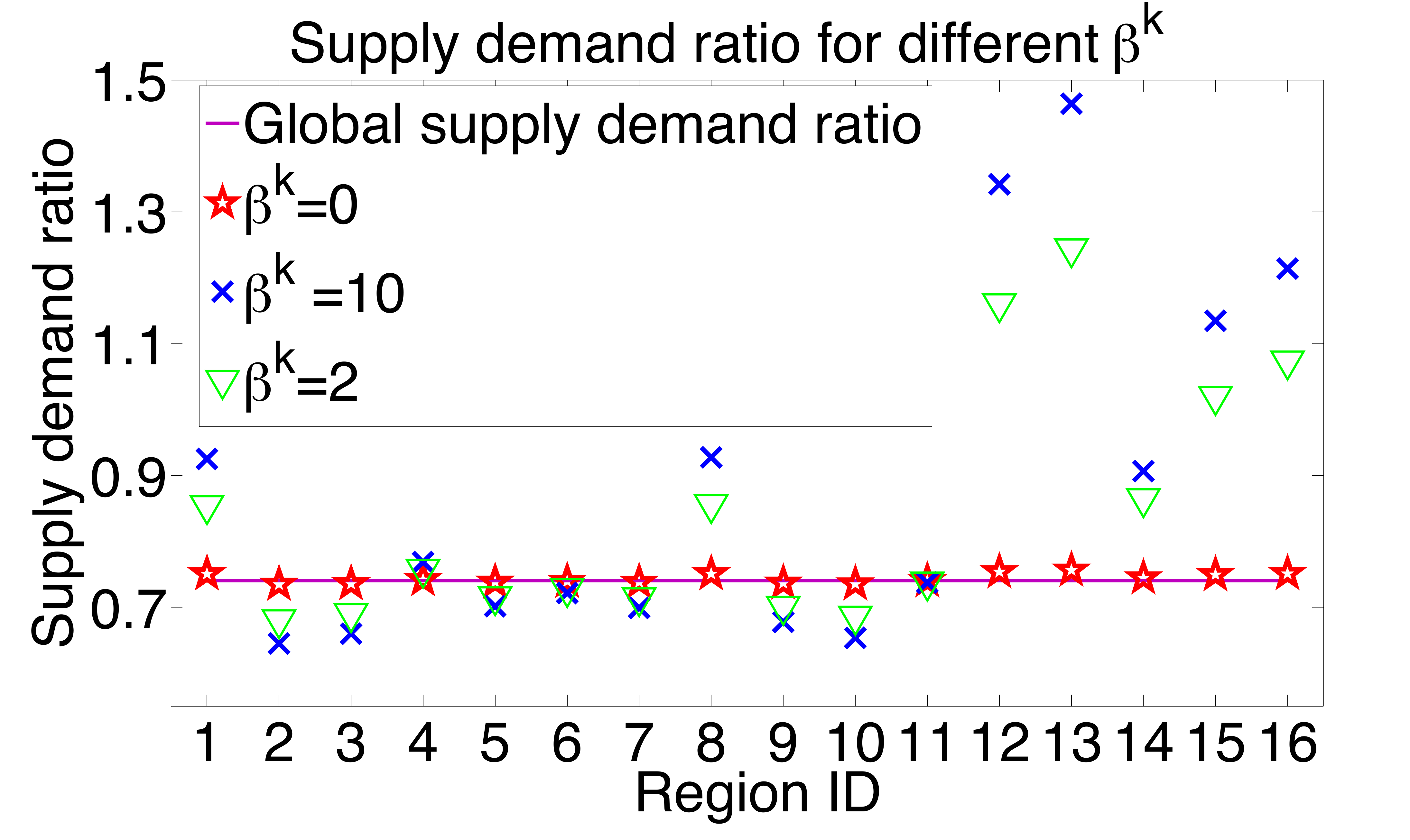}
\vspace{-10pt}
\caption{Comparison of supply demand ratios at each region during one time slot for different $\beta^k$ values. When $\beta^k$ is smaller, we put less cost weight on idle distance that taxis are allowed to run longer to some region, and taxi supply matches with the customer requests better.}
\label{compare_betaratio}
\end{figure} 
\begin{figure}[b!]
\centering
\includegraphics [width=0.37\textwidth]{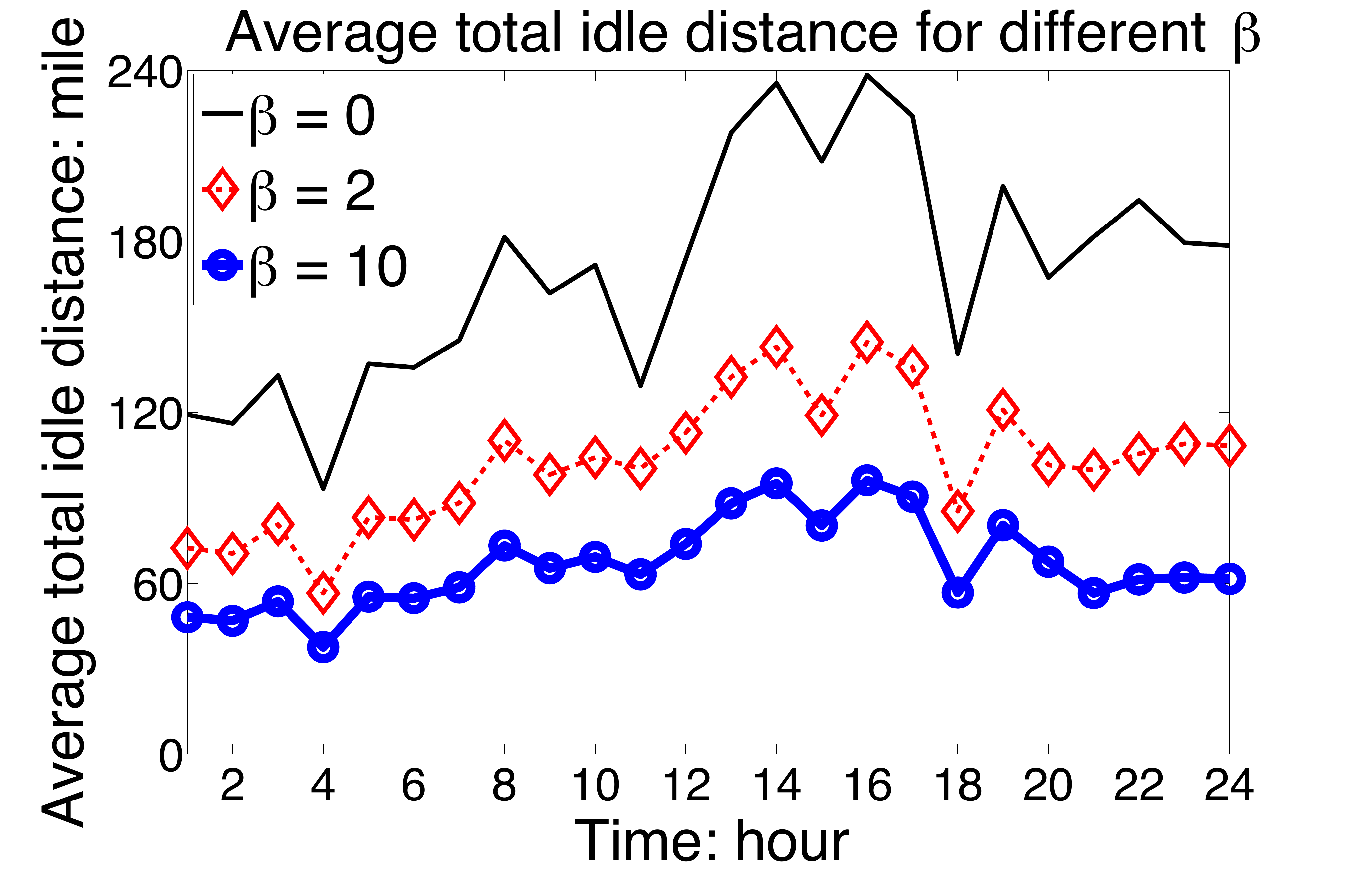}
\vspace{-10pt}
\caption{Average total idle distance of taxis at different hours. When $\beta^k$ is larger, the idle distance cost weights more in the total cost, and the dispatch solution causes less total idle distance.}
\label{compare_beta}
\end{figure} 
The cost function includes two parts --the idle geographical distance (mileage) cost and the supply demand ratio mismatch cost. This trade-off between two parts is addressed by $\beta^k$, and the weight of idle distance increases with $\beta^k$. A larger $\beta^k$ returns a solution with smaller total idle geographical distance, while a larger error between supply demand ratio, i.e., a larger $\left\|\frac{1}{N}\mathbf{1}^T_N X^k - \frac{1}{R^k}r^k\right\|_1$ value. 
The two components of the cost with different $\beta^k$ by Algorithm~\ref{mpc_rt}, and historical data without Algorithm~\ref{mpc_rt} are shown in Table~\ref{table_beta}. The supply demand ratio mismatch is shown in the s/d error column. 

We calculate the total cost as (s/d error $+\beta^k \times$ idle distance) (Use $\beta^k=10$ for the without dispatch column). Though with $\beta^k=0$ we can dispatch vacant taxis to make the supply demand ratio of each region closest to that of the whole city, a larger idle geographical distance cost is introduced compared with $\beta^k=2$ and $\beta=10$. 
Compare the idle distance when $\beta^k=0$ with the data without dispatch, we get $23\%$ reduction; compare the supply demand ratio error of $\beta^k=10$ with the data without dispatch, we get $32\%$. 

Average total idle distance during different hours of one day for a larger $\beta^k$ is smaller, as shown in Figure~\ref{compare_beta}. The supply demand ratio error at different regions of one time slot is increased with larger $\beta^k$, as shown in Figure~\ref{compare_betaratio}. 

\textbf{How to set idle distance threshold $\alpha^k$:} Figure~\ref{compare_alphak} compares the error between local supply demand ratio and global supply demand ratio. Since we directly use geographical distance measured by the difference between longitude and latitude values of two points (GPS locations) on the map, the threshold value $\alpha^k$ is small --- $0.1$ difference in GPS data corresponds to almost $7$ miles distance on the ground. 
When $\alpha^k$ increase, the error between local supply demand ratio and global supply demand ratio decreases, since vacant taxis are more flexible to traverse further to meet demand. 

\begin{figure}[b!]
\vspace{-15pt}
\centering
\includegraphics [width=0.38\textwidth]{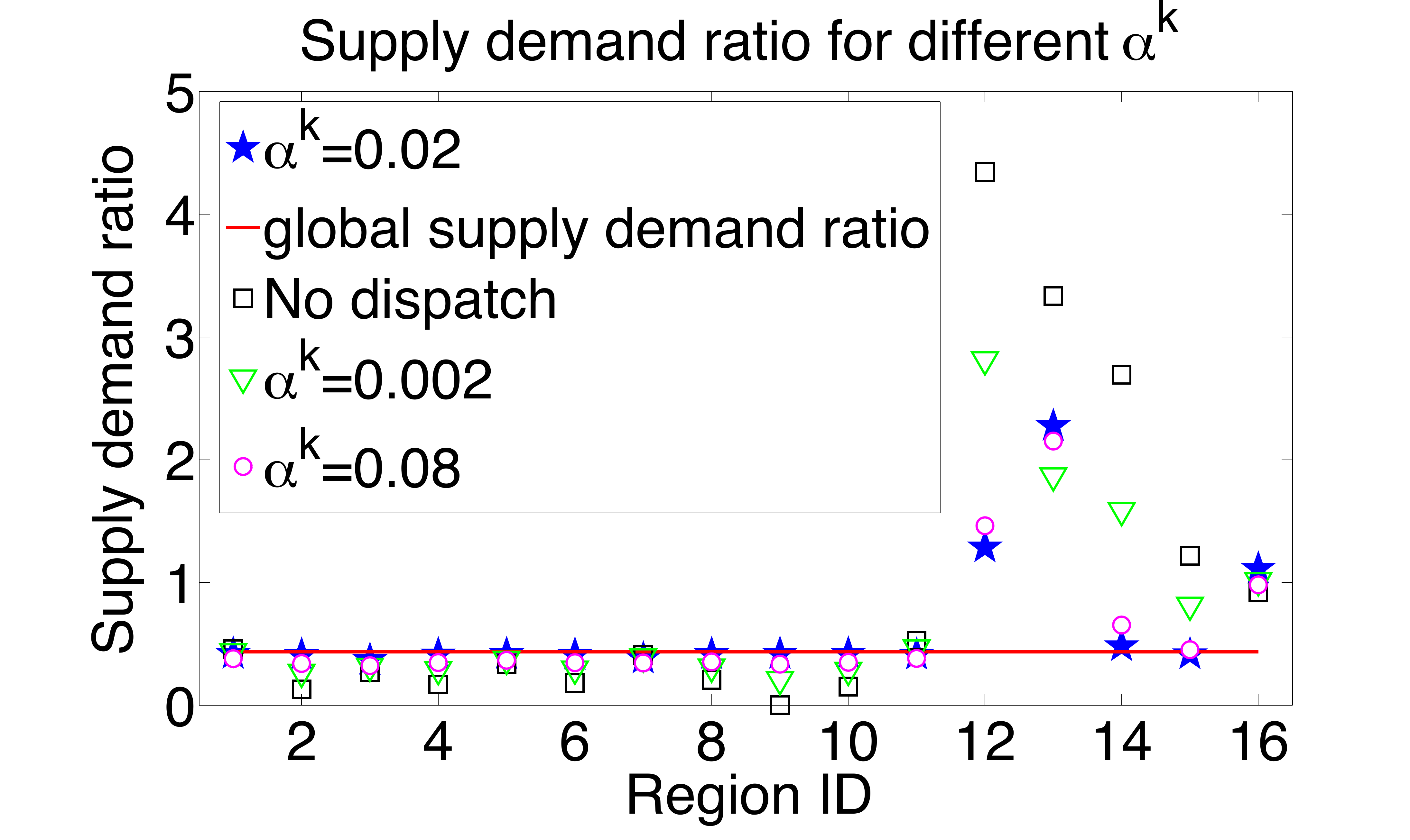}
\vspace{-10pt}
\caption{Comparison of supply demand ratios at each region during one time slot for different $\alpha^k$. When $\alpha^k$ is larger, vacant taxis can traverse longer to dispatched locations and match with customer requests better.}
\label{compare_alphak}
\end{figure}

\textbf{How to choose the number of regions}:
\begin{figure}[b!]
\vspace{-8pt}
\centering
\includegraphics [width=0.38\textwidth]{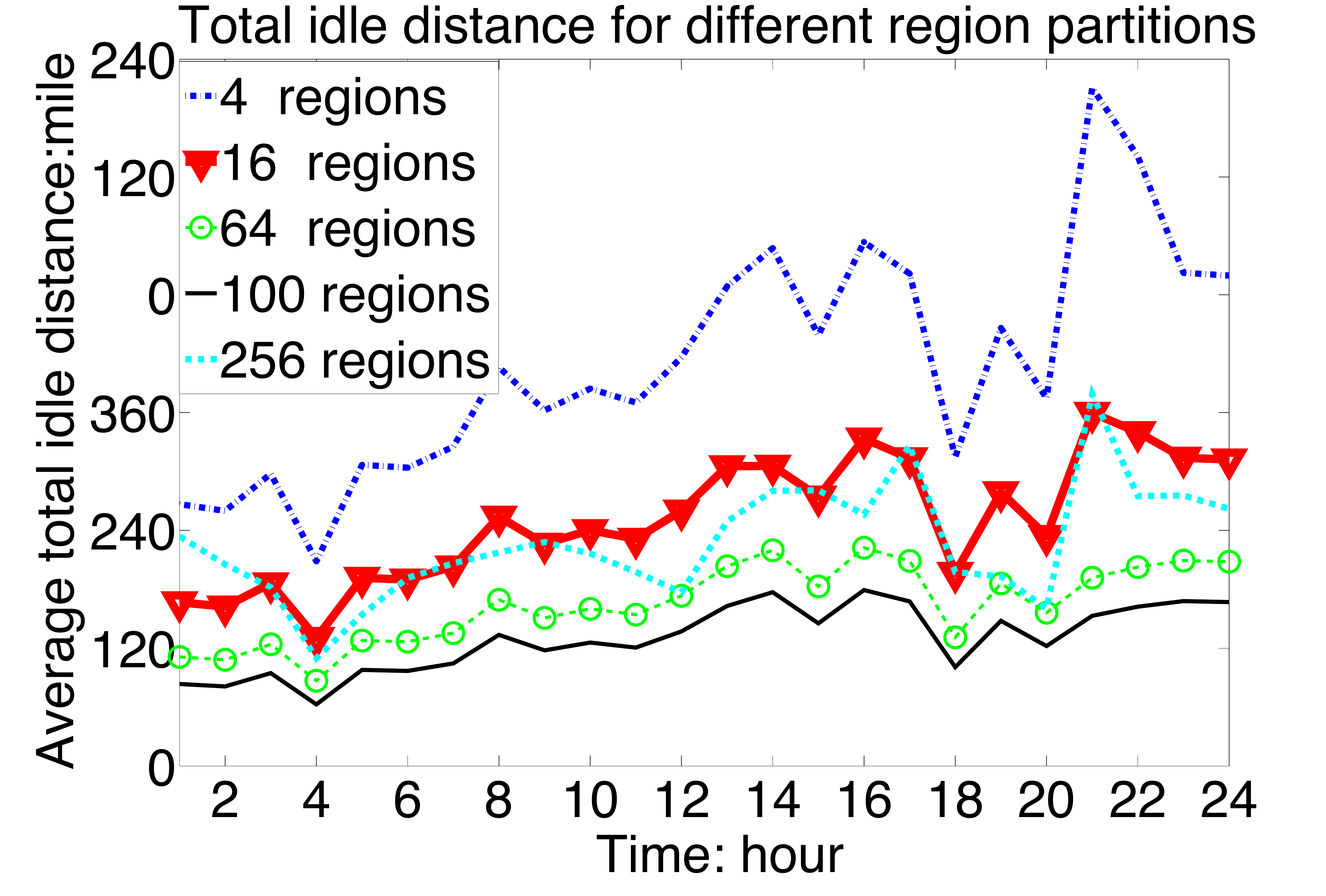}
\vspace{-10pt}
\caption{Average total idle distance of all taxis during one day, for different region partitions. Idle distance decreases with a larger region-division number, till the number increases to a certain level.}
\label{compare_region}
\end{figure} 
In general, the dispatch solution of problem~\eqref{opt2} for a vacant taxi is more accurate by dividing a city into regions of smaller area, since the dispatch is closer to road-segment level. However, we should consider other factors when deciding the number of regions, like the process of predicting requests vectors and mobility patterns based on historical data. A linear model we assume in this work is not a good prediction for future events when the region area is too small, since pick up and drop off events are more irregular in over partitioned regions.  
While Increasing $n$, we also increase the computation complexity. Note that the area of each region does not need to be the same as we divide the city in this experiment. 

Figure~\ref{compare_region} shows that the idle distance will decrease with a larger region division number, but the decreasing rate slows down; while the region number increases to a certain level, the idle distance almost keeps steady. 

\textbf{How to decide the prediction Horizon $T$}: 
\begin{figure}[t!]
\centering
\includegraphics [width=0.38\textwidth]{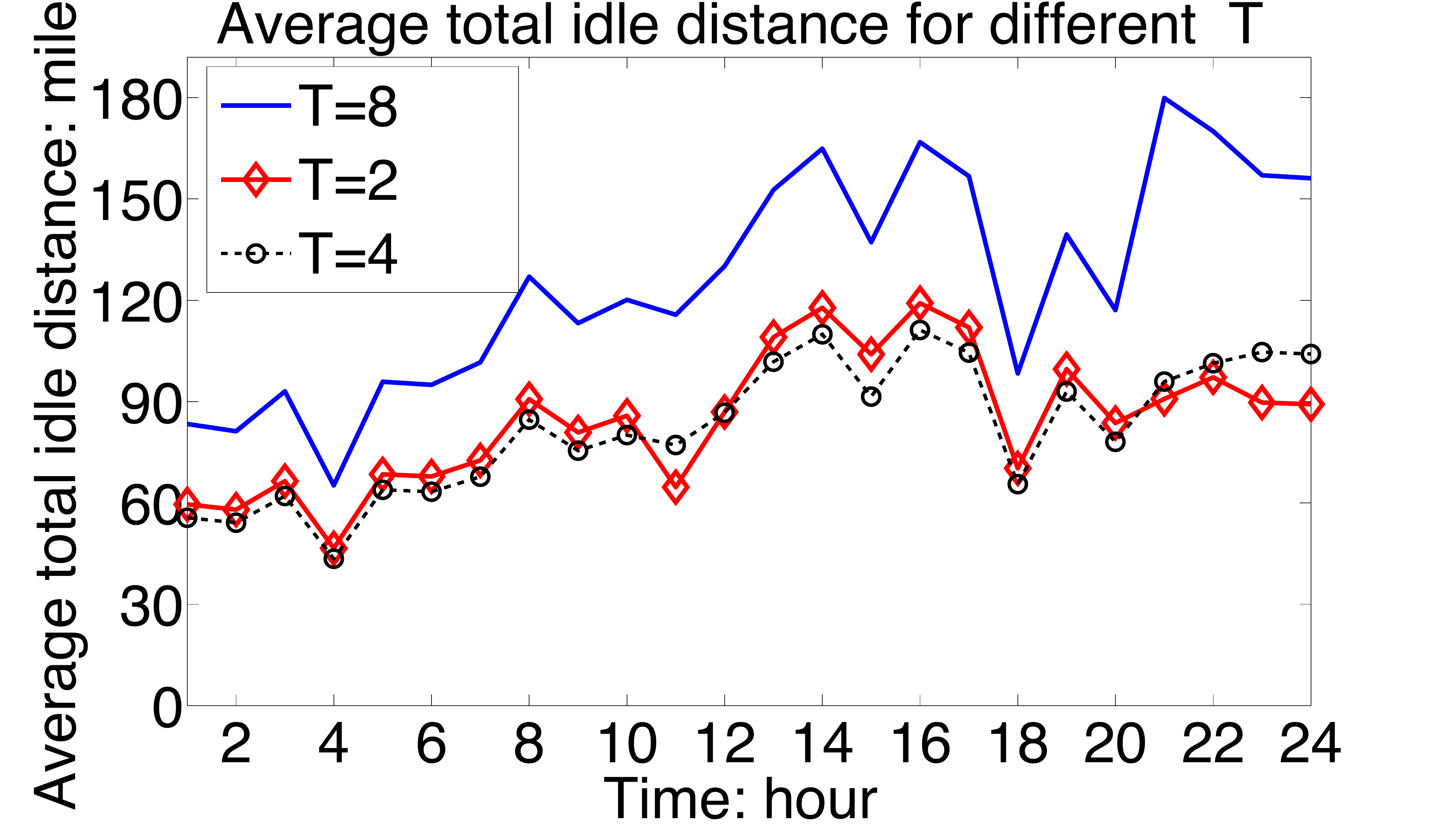}
\vspace{-10pt}
\caption{Average total idle distance at different time of one day compared for different prediction horizons.When $T=4$, idle distance is decreased at most hours compared with $T=2$. For $T=8$ the costs are worst.}
\label{compare_T}
\vspace{-15pt}
\end{figure} 
\begin{figure}[b!]
\vspace{-15pt}
\centering
\includegraphics [width=0.37\textwidth]{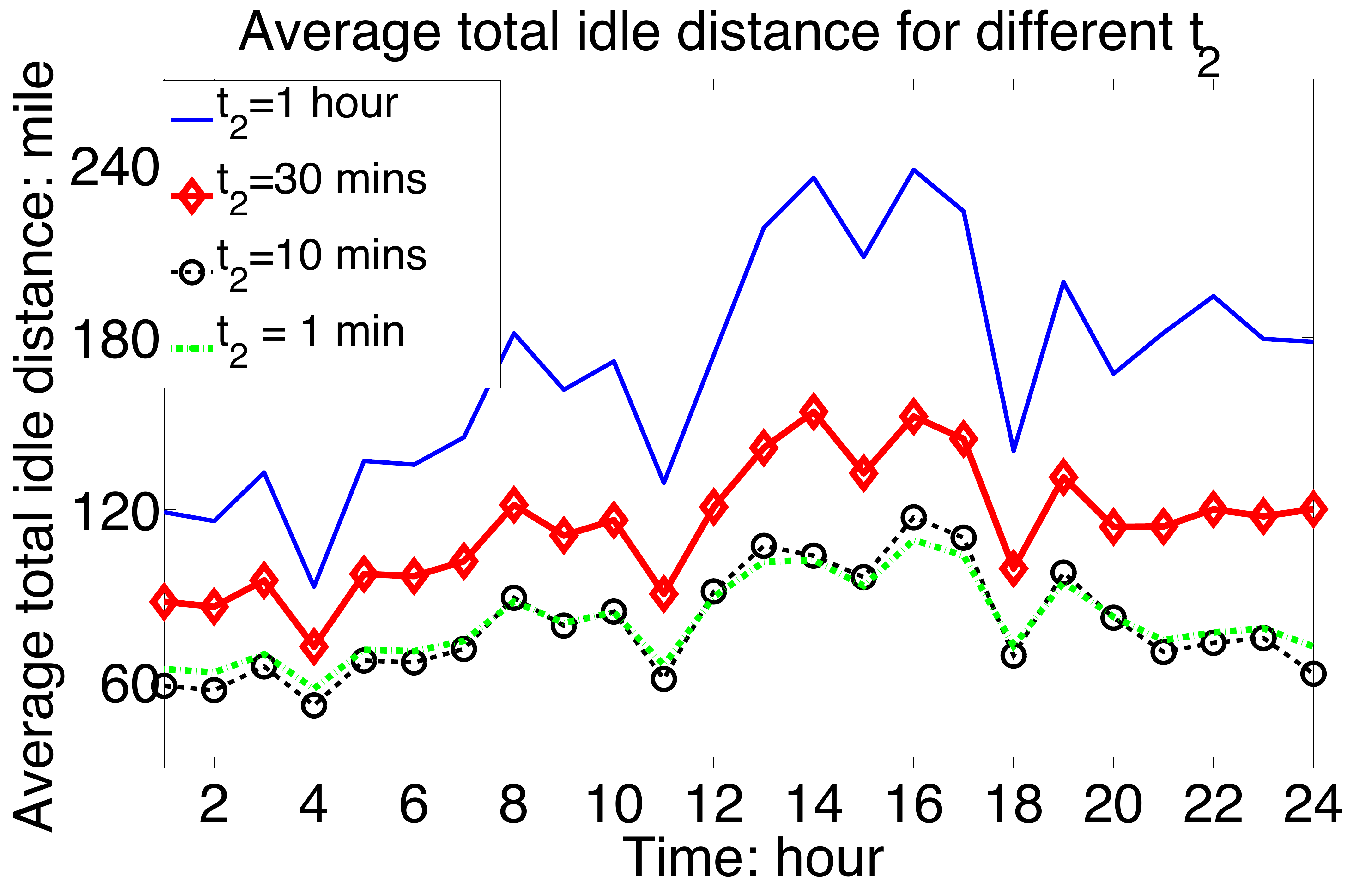}
\vspace{-10pt}
\caption{Comparison of average total idle distance for different $t_2$ -- the length of time slot for updating sensor information. With a smaller $t_2$, the cost is smaller. But when $t_2=1$ is too small to complete calculating problem~\eqref{opt2}, the dispatch result is not guaranteed to be better than $t_2=10$.}
\label{compare_t2}
\end{figure}
\begin{figure}[b!]
\vspace{-5pt}
\centering
\includegraphics [width=0.39\textwidth]{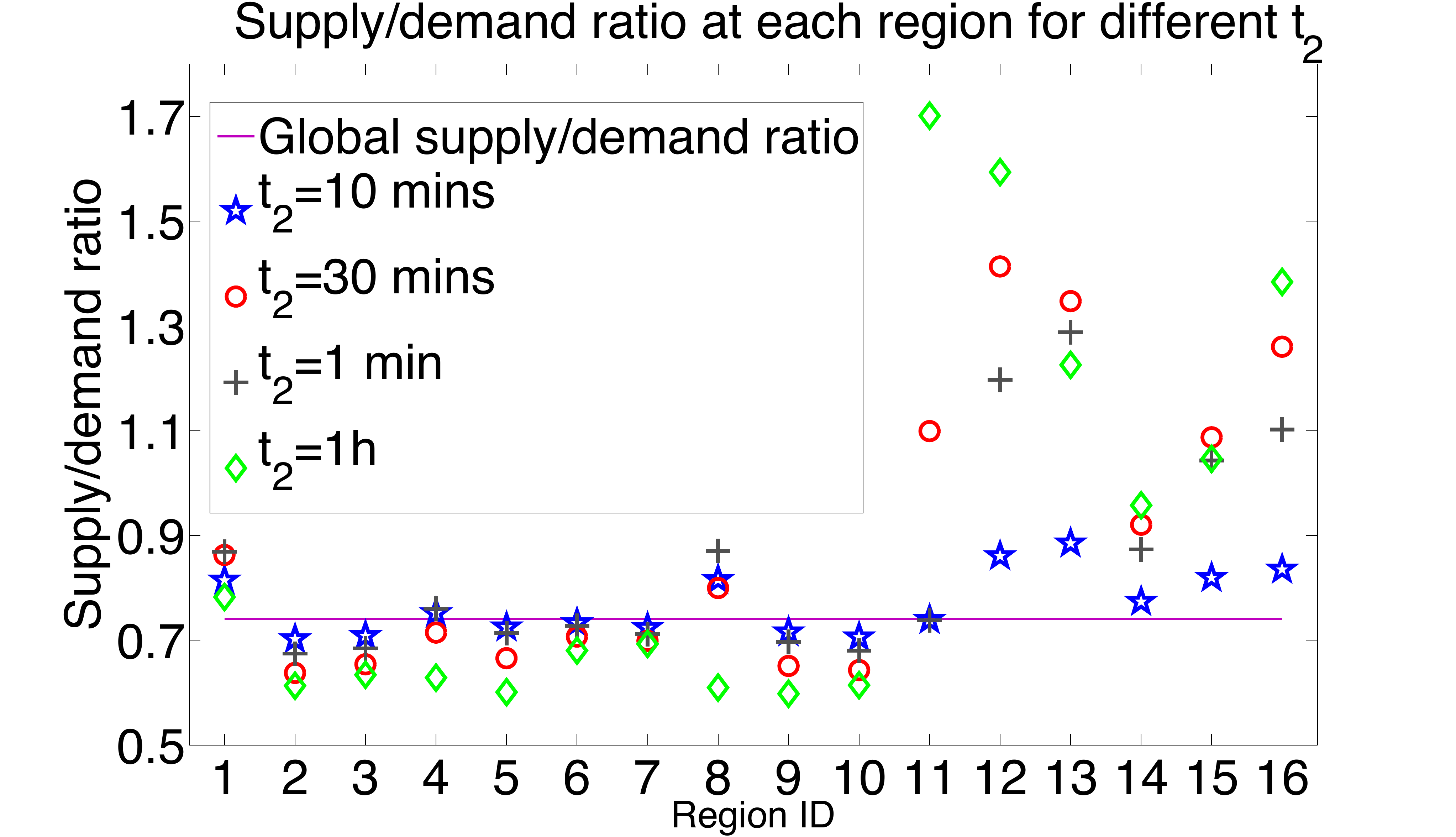}
\vspace{-10pt}
\caption{Comparison of the supply demand ratio at different regions for different lengths of time slot $t_2$. For $t_2=30$, $t_2=10$ mins and $t_2=1$ hour, results are similar. For $t_2=1$ min, the supply demand ratio is even worse at some regions, since the time slot is too short to complete one iteration.}
\label{compare_t2dist}
\end{figure}
In general, when $T$ is larger, the total idle distance to get a good supply demand ratio in future time slots should be smaller. However, when $T$ is large enough, increasing $T$ can not reduce the total idle distance any more, since the model prediction error compensates the advantage of considering future costs.
For $T=2$ and $T=4$, Figure~\ref{compare_T} shows that the average total idle distance of vacant taxis at most hours of one day decreases as $T$ increases. For $T=8$ the driving distance is the largest. Theoretical reasons are discussed in Section~\ref{sec:algorithm}.

\textbf{Decide the length of time slot $t_2$}: 
For simplicity, we choose the time slot $t_1$ as one hour, to estimate requests. A smaller time slot $t_2$ for updating GPS information can reduce the total idle geographical distance with real-time taxi positions. However, one iteration of Algorithm~\ref{mpc_rt} is required to finish in less than $t_2$ time, otherwise the dispatch order will not work for the latest positions of vacant taxis, and the cost will increase. Hence $t_2$ is constrained by the problem size and computation capability. Figure~\ref{compare_t2} shows that smaller $t_2$ returns a smaller idle distance, but when $t_2=1$ Algorithm~\ref{mpc_rt} can not finish one step iteration in one minute, and the idle distance is not reduced. The supply demand ratio at each region does not vary much for $t_2=30, t_2=10$ minutes and $t_2=1$ hour, as shown in Figure~\ref{compare_t2dist}. 
Comparing two parts of costs, we get that $t_2$ mainly affects the idle driving distance cost in practice.

\section{Conclusion}
\label{sec:conclusion}
In this paper, we propose an RHC framework for the taxi dispatch problem. This method utilizes both historical and real-time GPS and occupancy data to build demand models, and applies predicted models and sensing data to decide dispatch locations for vacant taxis considering multiple objectives. From a system-level perspective, we compute suboptimal dispatch solutions for reaching a globally balanced supply demand ratio with least associated cruising distance under practical constraints. Demand model uncertainties under disruptive events are considered in the decision making process via robust dispatch problem formulations. 
By applying the RHC framework on a data set containing taxi operational records in San Francisco, we show how to regulate parameters such as objective weight, idle distance threshold, and prediction horizon in the framework design process according to experiments. Evaluation results support system level performance improvements of our RHC framework. In the future, we plan to develop privacy-preserving control framework when data of some taxis are not shared with the dispatch center.

%
\small{
\bibliographystyle{abbrv}
\bibliography{taxi}
}

\begin{IEEEbiography}[{\includegraphics[width=1in,height=1.25in,clip,keepaspectratio]{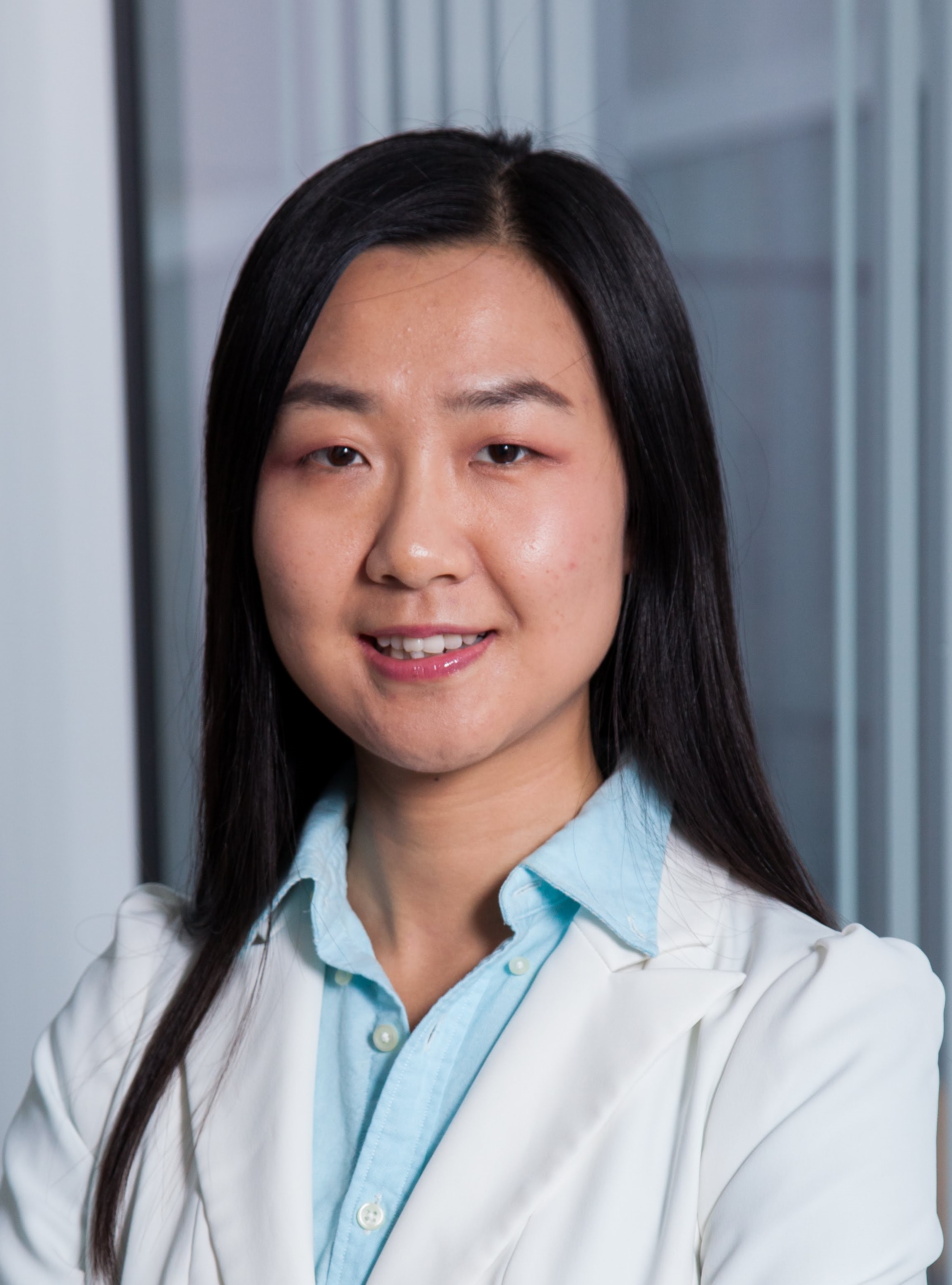}}]{Fei Miao} (S'13) received the B.Sc. degree in Automation from Shanghai Jiao Tong University, Shanghai, China in 2010. Currently, she is working toward the Ph.D. degree in the Department of Electrical and Systems Engineering at University of Pennsylvania. Her research interests include data-driven real-time control frameworks of large-scale interconnected cyber-physical systems under model uncertainties, and resilient control frameworks to address security issues of cyber-physical systems. She was a Best Paper Award Finalist at the 6th ACM/IEEE International Conference on Cyber-Physical Systems in 2015.
\end{IEEEbiography}

\begin{IEEEbiography}[{\includegraphics[width=1in,height=1.25in,clip,keepaspectratio]{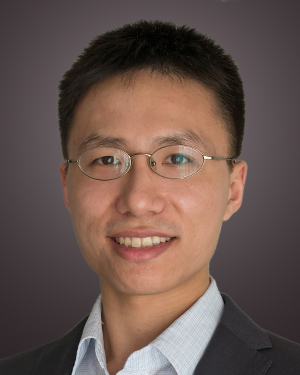}}]{Shuo Han}  (S'08-M'14) received the B.E. and M.E. degrees in Electronic Engineering from Tsinghua University, Beijing, China in 2003 and 2006, and the Ph.D. degree in Electrical Engineering from the California Institute of Technology, Pasadena, USA in 2013. He is currently a postdoctoral researcher in the Department of Electrical and Systems Engineering at the University of Pennsylvania. His current research interests include control theory, convex optimization, applied probability, and their applications in large-scale interconnected systems.
\end{IEEEbiography}

\begin{IEEEbiography}[{\includegraphics[width=1in,height=1.25in,clip,keepaspectratio]{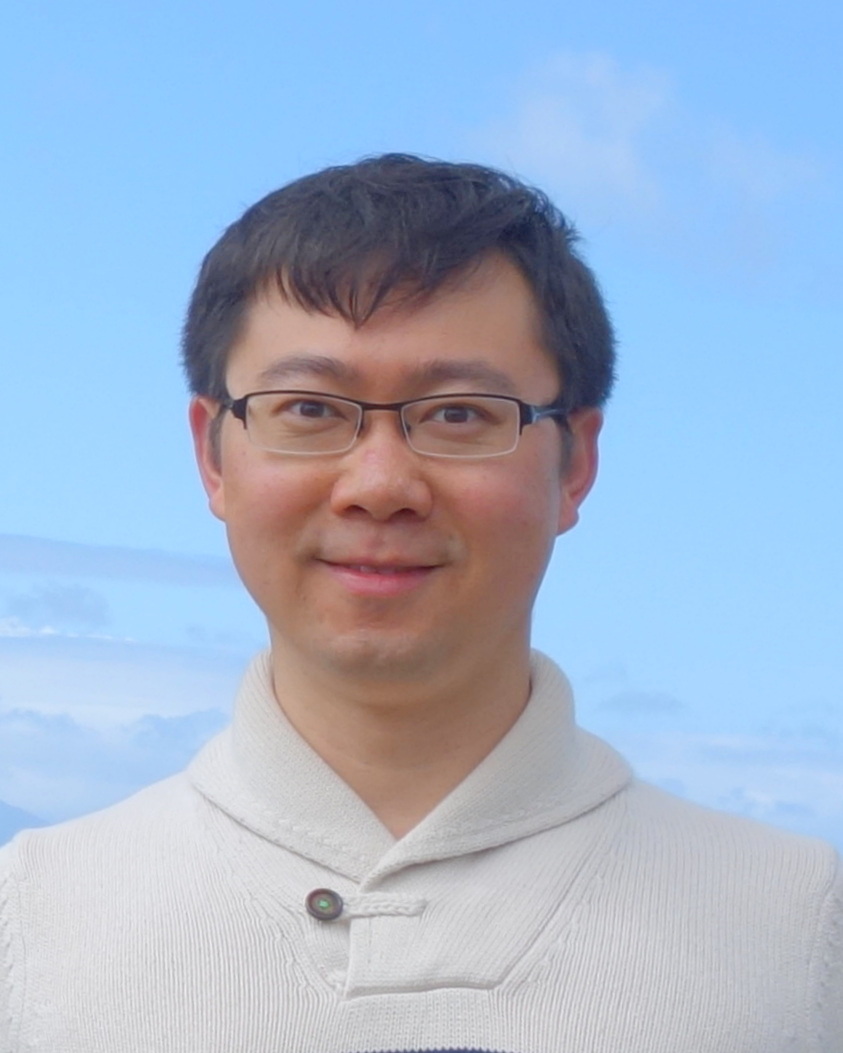}}]{Dr. Shan Lin} is an assistant professor of the Electrical and Computer Engineering Department at Stony Brook University. He received his PhD in computer science at the University of Virginia in 2010. His PhD dissertation is on Taming Networking Challenges with Feedback Control. His research is in the area of networked systems, with an emphasis on feedback control based design for cyber-physical systems and sensor systems. He works on wireless network protocols, interoperable medical devices, smart transportation systems, and intelligent sensing systems.
\end{IEEEbiography}

\begin{IEEEbiography}[{\includegraphics[width=1in,height=1.25in,clip,keepaspectratio]{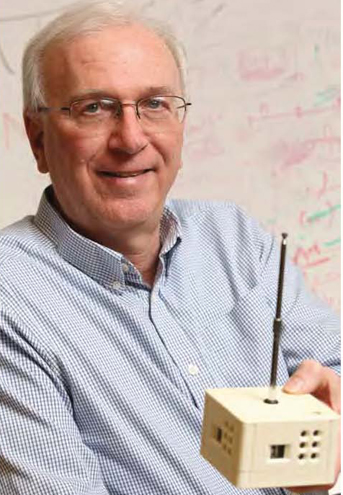}}]{Professor John A. Stankovic} is the BP America Professor in the Computer Science Department at the University of Virginia. He is a Fellow of both the IEEE and the ACM. He has been awarded an Honorary Doctorate from the University of York. He won the IEEE Real-Time Systems Technical Committee's Award for Outstanding Technical Contributions and Leadership.  He also won the IEEE Technical Committee on Distributed Processing's Distinguished Achievement Award. He has seven Best Paper awards, including one for ACM SenSys 2006. Stankovic has an h-index of 105 and over 40,000 citations. Prof. Stankovic received his PhD from Brown University.
\end{IEEEbiography}

\begin{IEEEbiography}[{\includegraphics[width=1in,height=1.25in,clip,keepaspectratio]{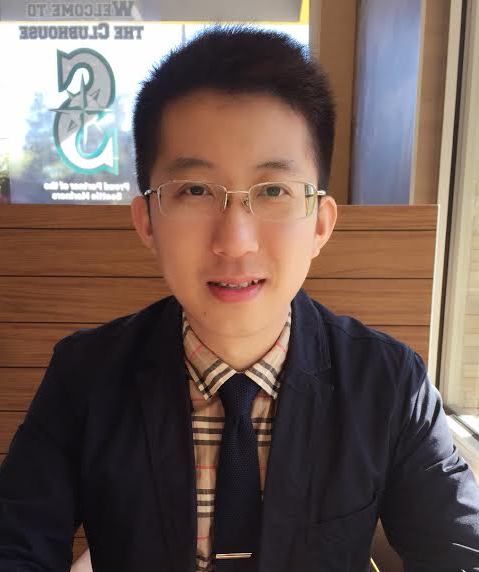}}]{Desheng Zhang} is a Research Associate at Department of Computer Science and Engineering of the University of Minnesota. Previously, he was awarded his Ph.D in Computer Science from University of Minnesota. His research includes big data analytics, mobile cyber physical systems, wireless sensor networks, and intelligent transportation systems. His research results are uniquely built upon large-scale urban data from cross-domain urban systems, including cellphone, smartcard, taxi, bus, truck, subway, bike, personal vehicle, electric vehicle, and road networks. Desheng designs and implements large-scale data-driven models and real-world services to address urban sustainability challenges.
\end{IEEEbiography}

\begin{IEEEbiography}[{\includegraphics[width=1in,height=1.1in,clip,keepaspectratio]{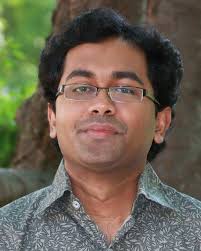}}]{Sirajum Munir} received his PhD in Computer Science from the University of Virginia in 2014. He is currently working at Bosch Research and Technology Center as a Research Scientist. His research interest lies in the areas of cyber-physical systems, wireless sensor and actuator networks, and ubiquitous computing. He has published papers in major conferences in these areas, two of which were nominated for best paper awards at ACM/IEEE ICCPS.
\end{IEEEbiography}

\begin{IEEEbiography}[{\includegraphics[width=1in,height=1.25in,clip,keepaspectratio]{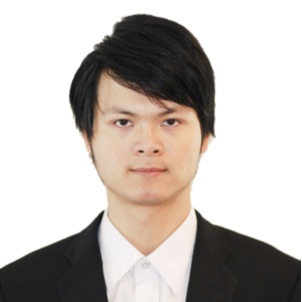}}]{Hua Huang} received the BE degree from Huazhong University of Science and Technology in 2012, MS degree in Temple University in 2014. He is currently working towards the PhD degree in the Department of Electrical and Computer Engineering in the Stony Brook University. His research interests include activity recognition in wearable devices and smart building, device-free indoor localization, deployment and scheduling in wireless sensor networks.
\end{IEEEbiography}

\begin{IEEEbiography}[{\includegraphics[width=1in,height=1.25in,clip,keepaspectratio]{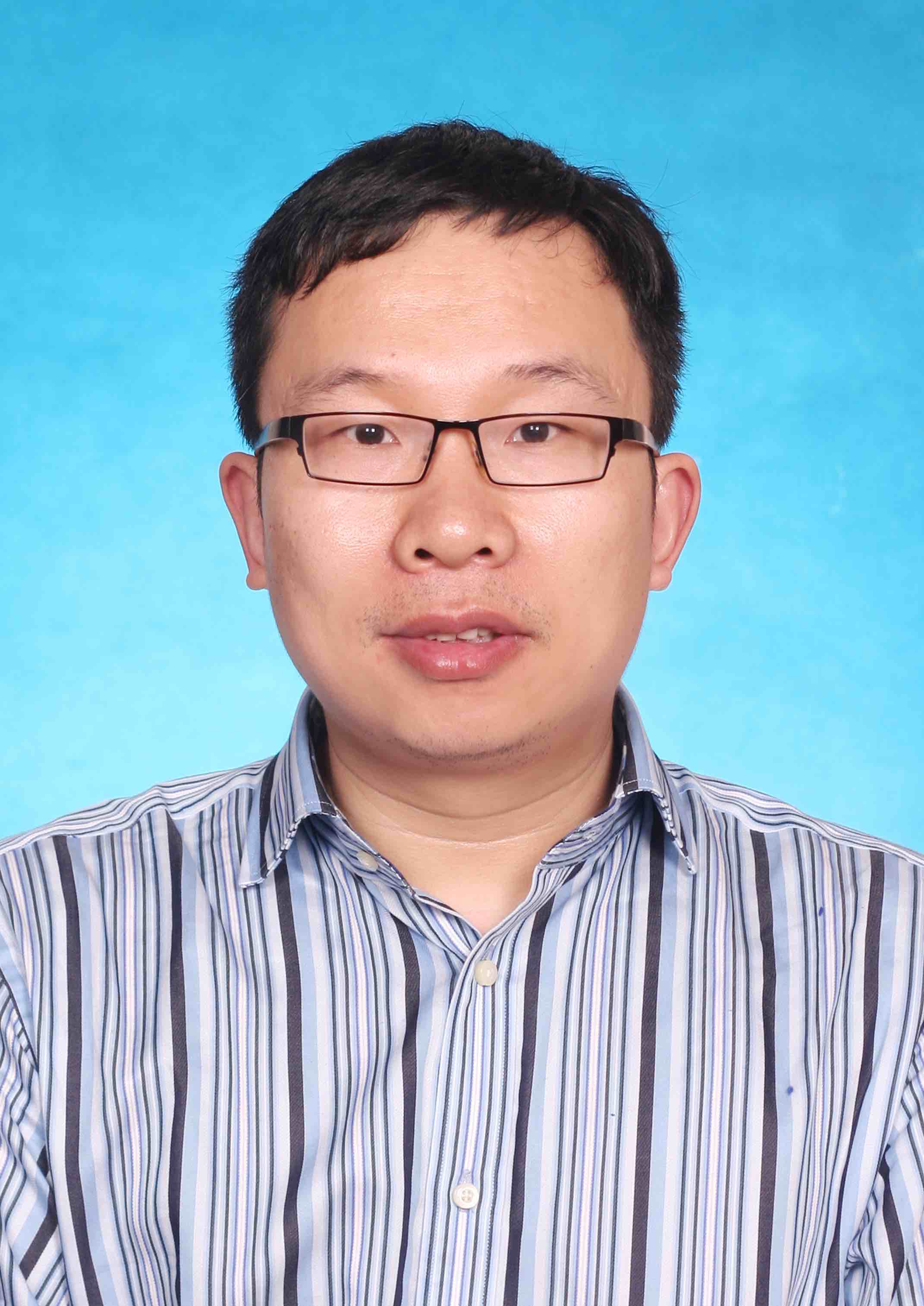}}]{Dr. Tian He} is currently an associate professor in the Department of Computer Science and Engineering at the University of Minnesota-Twin City. Dr. He is the author and co-author of over 200 papers in premier network journals and conferences with over 17,000 citations (H-Index 52). Dr. He is the recipient of the NSF CAREER Award, George W. Taylor Distinguished Research Award and McKnight Land-Grant Professorship, and many best paper awards in networking.  His research includes wireless sensor networks, cyber-physical systems, intelligent transportation systems, real-time embedded systems and distributed systems.
\end{IEEEbiography}

\begin{IEEEbiography}[{\includegraphics[width=1in,height=1.25in,clip,keepaspectratio]{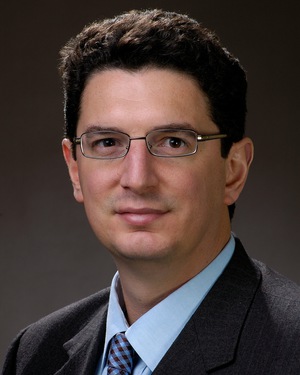}}]{George J. Pappas} (S'90-M'91-SM'04-F'09) received the Ph.D. degree in electrical engineering and computer sciences from the University of California, Berkeley, CA, USA, in 1998. He is currently the Joseph Moore Professor and Chair of the Department of Electrical and Systems Engineering, University of Pennsylvania, Philadelphia, PA, USA. He also holds a secondary appointment with the Department of Computer and Information Sciences and the Department of Mechanical Engineering and Applied Mechanics. He is a Member of the GRASP Lab and the PRECISE Center. He had previously served as the Deputy Dean for Research with the School of Engineering and Applied Science. His research interests include control theory and, in particular, hybrid systems, embedded systems, cyber-physical systems, and hierarchical and distributed control systems, with applications to unmanned aerial vehicles, distributed robotics, green buildings, and bimolecular networks. Dr. Pappas has received various awards, such as the Antonio Ruberti Young Researcher Prize, the George S. Axelby Award, the Hugo Schuck Best Paper Award, the George H. Heilmeier Award, the National Science Foundation PECASE award and numerous best student papers awards at ACC, CDC, and ICCPS.
\end{IEEEbiography}

\end{document}